\patchcmd\longtable{\par}{\if@noskipsec\mbox{}\fi\par}{}{}
\def\fps@figure{!htbp}
\theoremstyle{plain}
\newtheorem{corollary}{Corollary}
\theoremstyle{definition}
\theoremstyle{plain}
\theoremstyle{plain}
\newtheorem{proposition}{Proposition}
\theoremstyle{remark}
\newlength{\cslhangindent}
\newlength{\csllabelwidth}
\newenvironment{CSLReferences}[2] 
 {
  \setlength{\parindent}{0pt}
  \ifodd #1 \everypar{\setlength{\hangindent}{\cslhangindent}}\ignorespaces\fi
  \ifnum #2 > 0
  \setlength{\parskip}{#2\baselineskip}
  \fi
 }%
 {}
\def\set#1{\{ #1 \}}
\def\dif{\, d}
\def\pu{u_0}
\def\pv{v_0}
\def\inv{^{-1}}
\def\res{\hat{\theta}}
\def\E{\mathbb{E}}
\def\ta{\bar{a}}
\def\ttheta{\bar{\theta}}
\def\av{\hat{v}}
\def\pmap{\tilde{p}}
\def\amap{\tilde{a}}
\def\an{a_0}
\def\un{u_0}
\def\vn{v_0}
\newcommand{\state}{\theta}
\newcommand{\State}{\Theta}
\newcommand{\R}{\mathbb{R}}
\newcommand*{\ts}{\bar{\theta}}
\def\niu#1{{\color{black}#1}}
\newcommand*{\RR}{\mathbb{R}}
\begin{document}
\title{%
  The optimality of (stochastic) veto delegation%
  \thanks{\thx}
}
\long\def\thx{%
  We thank Navin Kartik and participants at
  various seminars and conferences for helpful comments. 
  All errors are ours.

  Xiaoxiao Hu, Ningbo University.
  Email: \url{xhuah@connect.ust.hk}.

  Haoran Lei, Hunan University.
  Email: \url{hleiaa@connect.ust.hk}.  
}
\author{Xiaoxiao Hu \and Haoran Lei}

\date{\today}

\maketitle

\begin{abstract}
We analyze the optimal delegation problem between a principal and an agent,
assuming that the latter has state-independent preferences.
We demonstrate that if the principal is more risk-averse than the agent 
toward non-status quo options, 
an optimal mechanism is a {\em veto mechanism}.
In a veto mechanism, the principal uses veto (i.e., maintaining the status quo)
to balance the agent's incentives and does not randomize among non-status quo options.
We characterize the optimal veto mechanism in a one-dimensional setting.
In the solution, the principal uses veto only when the state surpasses a critical threshold.

\bigskip

\textbf{Keywords:}
Optimal delegation, Veto delegation, Stochastic mechanism

\textbf{JEL Classification:} D72, D82

\end{abstract}

\clearpage
    {\small\tableofcontents}

\clearpage  

\section{Introduction}\label{sec-intro}
In many settings of economic and political interest, a
principal consults an informed but biased agent while
contingent transfers between them are infeasible. The
principal then specifies a permissible set of options,
termed as the {\em delegation set}, and allows the agent to
choose any option from this set.
Holmstrom~(1980) analyzes the delegation problem when the
agent's utility depends on both the chosen option and the
realized state. 
He characterizes the optimal delegation sets,
focusing on the case when the delegated options form an interval (i.e., {\em interval delegation}).
There is a large literature following Holmstrom~(1980),%
    \footnote{We briefly review the literature on optimal 
    delegation in Section~\ref{sec-lit}.}
most of which assume state-dependent agent preferences and
emphasize the optimality of interval delegation.

Nevertheless, in many real-life interactions,
agents have \textit{state-independent} preferences and only care about the final decisions.
For example, prosecutors may seek their own favorable rulings regardless of the severity of defendants' crimes,
and financial advisors may recommend higher-commission products irrespective of their appropriateness.
When the agent has state-independent preferences,
any interval delegation set fails to elicit the agent’s private information and 
the principal cannot leverage the agent's expertise through interval delegation. 
On the other hand, interval delegation also seems incompatible with
the \emph{veto delegation} ubiquitous in many organizations,
where the principal can only either accept the agent's proposal or reject it in favor of an outside option. 

Veto delegation has been widely used in various organizations,
especially in the legislative processes and corporate governance. 
In the legislation, veto delegation is known as the \emph{closed rule},
where constituents can either approve or veto a committee's proposed bill.
The closed rule is deemed to constitute a ``critical component of managerial power
in the U.S. House of Representatives'' (Doran,~2010).
In corporate governance, boards of directors often can only disapprove proposals but cannot unilaterally enact new ones. 
More instances of the veto delegation in other organizations are documented in Marino~(2007),
Mylovanov~(2008) and Lubensky and Schmidbauer~(2018).

Motivated by these observations, we analyze the optimal delegation problem between 
a principal and an agent where the agent's preference is state-independent. 
In our model, the principal decides whether to maintain the status quo or choose another option.
To elicit information from the biased agent, the principal commits to a (possibly stochastic) direct mechanism.
We demonstrate that a principal's optimal mechanism must be a {\em veto mechanism}, 
provided that the principal is more risk-averse than the agent regarding non-status quo options. 
The veto mechanism is a specific direct mechanism in which
the principal never
randomizes among non-status quo options and only uses veto to balance the agent's incentives for truth-telling.
Our result rationalizes veto delegation as an arrangement
that achieves the optimal outcome.

\subsection{Motivating example}\label{a-motivating-example}
To illustrate our main result, consider a political advisor (agent) advising a policy maker (principal).%
    \footnote{This example is adapted from the think-tank game in Lipnowski and Ravid~(2020), pp.~1632--1634.}
There are two equally likely states, $\theta_1$ and $\theta_2$,
and the agent privately observes the realized one.
The policy maker decides between a new policy ($a_1$ or $a_2$) or maintaining the status quo policy $a_0$.
The mild policy $a_1$ is suitable when the
state is $\theta_1$, and the aggressive policy $a_2$ is
suitable when the state is $\theta_2$.
The principal obtains a payoff of $1$
if the enacted new policy is suitable, or $-3$ otherwise;
the status quo $a_0$ yields a constant payoff of $0$ for the principal.
The advisor's preference ranks $a_2$ highest and $a_0$ lowest, regardless of the state. 
Suppose the advisor's payoff is $i$ when $a_i$ is chosen for $i \in \{0,1,2\}$.
Table~\ref{tab-payoff} shows players' payoffs for each policy--state combination.

\begin{table}[H]
    \centering
    \caption{Payoffs of the policy maker and the political advisor}
    \label{tab-payoff}
 
    \begin{tabular}{@{}llll@{}}
    & $a_0$     & $a_1$        & $a_2$      \\ \cmidrule(l){2-4} 
    \multicolumn{1}{l|}{$\theta_1$} & \multicolumn{1}{l|}{$(0,0)$} & \multicolumn{1}{l|}{$(1,1)$}  & \multicolumn{1}{l|}{$(-3,2)$} \\ \cmidrule(l){2-4} 
    \multicolumn{1}{l|}{$\theta_2$} & \multicolumn{1}{l|}{$(0,0)$} & \multicolumn{1}{l|}{$(-3,1)$} & \multicolumn{1}{l|}{$(1,2)$}  \\ \cmidrule(l){2-4} 
    \end{tabular} 
\end{table}

The policy maker's ex ante optimal choice is $a_0$, yielding a payoff of $0$ for both players. 
Can the policy maker elicit information from the extremely biased advisor through delegation? 
The answer is yes. The policy maker can commit to the following 
mechanism where she may accept the advisor's proposal or {\em use veto} (i.e., choosing $a_0$): if the advisor proposes $a_1$, the policy maker will approve; if $a_2$ is proposed, 
the policy maker will approve or veto with equal probabilities. 
Given the principal's commitment,
the advisor is indifferent between proposing either new policy. 
Suppose the advisor proposes $a_i$ at each state $\theta_i$.
Then, the policy maker's expected utility is $3/4 > 0$.
The described veto mechanism can be viewed as the policy
maker delegating two \textit{lotteries} to the advisor: $l_1 = (1
\circ a_1)$ and $l_2 = (0.5 \circ a_0, 0.5 \circ a_2)$.%
\footnote{%
  Throughout this paper we denote a lottery with finite
  possible outcomes by $l = (p_1 \circ a_1, \dots, p_n \circ
  a_n)$ with $\sum_{i=1}^n p_i = 1$, meaning that outcome
  $a_i$ occurs with probability $p_i$ for each $i \in
  \set{1,\dots,n}$.
} 
Therefore, the policy maker benefits from delegating his
decisions to the agent when stochastic mechanisms are
allowed. 

One can verify that the described mechanism, while being
conceptually simple and relatively easy to implement, yields
the highest payoff for the principal among all
incentive-compatible mechanisms. The key feature of the veto
mechanism is that the principal only uses veto to balance
the agent's incentives; that is, she never randomizes
between non-status quo options. Later, we demonstrate the
optimality of the veto mechanism in a general environment.
In this sense, our paper contributes to understanding why veto delegation is
prevalent in various organizations.

\subsection{Literature}\label{sec-lit}
Our paper is mostly related to the literature on optimal delegation. Starting from Holmstrom~(1980), seminal works in this literature include Melumad and Shibano~(1991), Alonso and Matouschek~(2008), Kováč and Mylovanov~(2009), Amador and Bagwell~(2013), Kolotilin and Zapechelnyuk~(2019), etc.
Variants of the delegation model have been used in political economy (Bendor et al., 2001; Krishna and Morgan, 2001), monopoly regulation and organization (Baron and Myerson, 1982; Aghion and Tirole, 1997), tariff policies (Amador and Bagwell, 2013), etc. 
Despite the vast body of existing literature, this paper is, to the best of our knowledge, the first to explore the one-shot%
      \footnote{Frankel~(2006) and Chen~(2022) study the dynamic delegation problem with agent state-independent preferences.} 
optimal delegation problem where the agent has state-independent preferences. 
State-independent agent preferences imply that deterministic mechanisms are generally suboptimal.
Specifically, if the agent's utility function is injective,
no incentive-compatible deterministic mechanisms can elicit the agent's private information.
The potential non-optimality of deterministic mechanisms in delegation has been demonstrated in previous studies,
such as Kováč and Mylovanov~(2009, Section 4) and Kartik, Kleiner, and Van Weelden~(2021, Appendix~E).
In the same paper, Kartik, Kleiner, and Van Weelden also provide a practical real-life example of using stochastic delegation in nominating judges.

As we interpret our result as demonstrating the optimality of veto delegation, 
we discuss the relationship between this paper and the growing literature on veto delegation
(Dessein, 2002; Marino, 2007; Mylovanov, 2008; Lubensky and Schmidbauer, 2018).
The above-mentioned works differ from the optimal delegation literature as they focus on 
specific game forms of veto delegation---%
the agent proposes some option and the principal either accepts it or rejects it for an outside option.
Fixing an exogenous outside option, Dessein~(2002) compares full delegation
and veto delegation and shows that full delegation dominates veto delegation
as long as the conflict of interest is not extreme.
Lubensky and Schmidbauer~(2018) strengthen the result of Dessein (2002),
demonstrating that full delegation is superior by explicitly characterizing the
most informative veto equilibrium. 
Marino~(2007) challenges Dessein’s conclusion and shows the superiority 
of veto delegation in another setting with different
assumptions of players’ preferences and prior distribution. 
Mylovanov~(2008) allows the outside option to be chosen endogenously and 
shows the equivalence of optimal delegation and optimal veto delegation.
It is worth emphasizing that unlike those works mentioned above, 
we adopt a mechanism design approach and the
veto delegation arrangement emerges as the principal’s optimal mechanism.
Another important distinction is that in the above-mentioned works, the
principal decides whether to veto or not after the agent’s proposal; however, in
our paper the principal commits ex ante to his probability of veto conditional
on the agent's proposal to ensure information provision from the extremely
biased agent.

Finally, our paper relates to the broader information transmission literature. The assumption of sender (i.e., agent) state-independent preferences is common in the literature of communication with hard evidence
(Glazer and Rubinstein, 2004;
Glazer and Rubinstein, 2006; 
Hart et al., 2017) 
and information design (e.g., the judge-prosecutor game in Kamenica and Gentzkow (2011)).
There is also a growing literature on cheap talk with sender state-independent preferences (Chakraborty and Harbaugh, 2010; Lipnowski and Ravid, 2020; Diehl and Kuzmics, 2021). While this assumption is arguably of great empirical relevance, it has rarely been studied in the existing delegation literature. This paper therefore fills the gap.
\section{Model} \label{sec-model}
There are two players, a principal (he) and a better-informed agent (she).
The state~$θ$ follows some full-support distribution $μ ∈ Δ(Θ)$,
where the state space $Θ$ is a subset of $ℝ^n$ for some positive integer $n$.
The agent privately observes the realized state.
Denote by $a \in A$ a generic option or allocation,
where the set of available options $A$ is a subset of $ℝ^\ell$ for some positive integer $\ell$.
Principal's and agent's payoff functions are
$u \colon A × Θ → ℝ$ and $v \colon A → ℝ$, respectively.
While the principal's payoffs depend on the realized state, the agent's do not.

\paragraph*{Status-quo option}
Notably, there exists a unique \textit{status quo option} $a_{0} \in A$, and 
any option distinct from $a_{0}$ is referred as a \textit{non-status quo option.}
We interpret choosing $\an$ as maintaining the status quo option or allocation. 
Real-life instances of $\an$ include customers buying nothing from the intermediaries,
investors rejecting the financial products recommended by the financial advisors,
and policy makers vetoing the political advisors' proposals for new policies. 
The principal is assumed to be \textit{more risk-averse} than the agent toward choosing non-status quo options at \textit{all} states.
Formally, for each $θ ∈ Θ$,
there exists a strictly concave transformation $h_θ: ℝ \to ℝ$  such that
$u(a, θ) = h_θ (v(a))$ for all $a \ne a_{0}$.
Additionally, the set $v (A \setminus \set{a_0}) = \set{ v(a) : a \in A \text{ and } a \ne \an }$
is assumed to be connected.

Following the literature on optimal delegation (e.g., Melumad and Shibano, 1991; Martimort and Semenov, 2006; Alonso and Matouschek, 2008),
we approach the delegation problem from a mechanism design perspective
and focus on the principal's optimal mechanism.
A \emph{direct mechanism} is a Borel measurable function $m : Θ \to Δ(A)$
such that expected payoffs are integrable.
To simplify notations, 
we extend the domains of players' payoff functions to incorporate stochastic options:  
$$
u(α, θ) := \int_A  u(a, θ)  \, d α
\quad \text{and} \quad
v(α) := \int_A  v(a)  \, d α
$$
where $α \in \Delta (A)$ denotes some stochastic option.
A mechanism $m$ is \textit{incentive compatible} if
\[
v(m(θ)) \ge v(m({θ'})) \text{ for all } θ, θ' ∈ Θ. \tag{IC}  
\]
Constraints (IC) can be reduced to a
set of indifference constraints:
\begin{equation}
  v(m(θ)) = \av \text{ for {\em all} $θ ∈ Θ$ and some $\av \in \RR$.}
  \label{eq-ic}
  \tag{$\mathcal{I}$}
\end{equation}
As the agent's preference is state-independent,
the reduced incentive-compatibility constraints \eqref{eq-ic} 
are also state-independent. That is,
in an incentive-compatible mechanism,
the agent obtains the same utility from any report regardless of the state.
Principal's maximization problem is given by:
\begin{equation}
\max_{m, \, \av ∈ ℝ} 
\int_\Theta u(m(θ), θ) \, d μ \text{ subject to constraints~\eqref{eq-ic}.}
\label{model:dm1}
\tag{$\mathcal{M}$}
\end{equation}

A direct mechanism $m^*$ is called an \textit{optimal mechanism} if 
$(m^*, \hat v^*)$ solves the problem~\eqref{model:dm1}
where $\hat v^* = v(m^* (θ))$ for all $θ$. 
The solution concept is \textit{perfect Bayesian equilibrium} (henceforth, \emph{equilibrium}).
One interpretation of our mechanism design approach, as in most applied mechanism design papers, is to find an upper bound on the principal's welfare. Having said that, we also find the implementation via veto delegation fitting naturally into various contexts.

\subsection{Veto mechanisms}\label{sec-veto-def}
A direct mechanism $m$ is a {\em veto mechanism} if there exist two mappings,
$\pmap \colon Θ \to [0,1]$ and $\amap \colon Θ \to A \setminus \set{a_0}$,
such that for almost every state $θ$ with respect to the prior distribution $μ$, 
the induced stochastic option $m(θ)$ assigns probability $\pmap(\state)$ to the status quo option $\an$
and the complementary probability $1 - \pmap(\state)$ to the non-status quo option $\amap (\state)$.
This definition corresponds to the real-life veto delegation scenario where
the agent proposes some option $\amap(\theta)$
and then the principal vetoes that proposal 
with probability $\pmap(\state)$.
We refer to $\pmap (θ)$ as the \textit{veto probability}.
The veto mechanism encompasses all deterministic mechanisms as well as a class of simple stochastic mechanisms.
In these stochastic mechanisms,
whenever the principal makes a random choice,
the randomization occurs only between the status quo and one other option.

\Cref{prp:main} allows us to focus on veto mechanisms when
searching for an optimal mechanism.


\begin{proposition}\label{prp:main}
  A direct mechanism is optimal {\em only if} it is a veto mechanism.
\end{proposition}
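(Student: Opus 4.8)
The plan is to use that the indifference constraint~\eqref{eq-ic} pins the agent's payoff $v(\alpha_\theta)$ to a common value $v^*$ at every state, so that, state by state, the principal is maximizing the linear functional $\alpha_\theta\mapsto u(\alpha_\theta,\theta)$ over the probability measures with $v(\alpha_\theta)=v^*$; strict risk aversion will then force any optimal $\alpha_\theta$ to place all of its non-status quo mass on a single option. Concretely, fix an optimal $\alpha^*$ with associated value $v^*$, and for each $\theta$ write $p(\theta):=\alpha^*_\theta(\set{\an})$; on the event $p(\theta)<1$ let $\beta_\theta\in\Delta(A\setminus\set{\an})$ be the conditional distribution of $\alpha^*_\theta$ given a non-status quo outcome, so that $\alpha^*_\theta=p(\theta)\,\delta_{\an}+(1-p(\theta))\,\beta_\theta$, and set $w(\theta):=v(\beta_\theta)$. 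Then~\eqref{eq-ic} becomes $p(\theta)v(\an)+(1-p(\theta))w(\theta)=v^*$, and, since $u(a,\theta)=h_\theta(v(a))$ for every $a\ne\an$, the principal's payoff at $\theta$ equals $p(\theta)u(\an,\theta)+(1-p(\theta))\int h_\theta(v)\dif\beta_\theta$.

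The core step is a single use of Jensen's inequality: as $h_\theta$ is strictly concave, $\int h_\theta(v)\dif\beta_\theta\le h_\theta\big(w(\theta)\big)$, with equality if and only if $v$ is $\beta_\theta$-almost surely constant. Because $v(A\setminus\set{\an})$ is connected it is an interval of $\R$, so the average $w(\theta)$ lies in $v(A\setminus\set{\an})$ (it cannot equal an endpoint the interval omits, since $v$ is then $\beta_\theta$-a.s.\ strictly on one side of it), and I may pick $\amap(\theta)\in A\setminus\set{\an}$ with $v(\amap(\theta))=w(\theta)$ --- deferring until the end the measurability of this choice; on $\set{p(\theta)=1}$, where $\alpha^*_\theta=\delta_{\an}$, let $\amap(\theta)$ be a fixed element of $A\setminus\set{\an}$ (assumed nonempty, as is implicit). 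With $\pmap:=p$, the mechanism $\alpha'_\theta:=\pmap(\theta)\,\delta_{\an}+(1-\pmap(\theta))\,\delta_{\amap(\theta)}$ is by construction a veto mechanism, satisfies~\eqref{eq-ic} with the same $v^*$ because $v(\alpha'_\theta)=p(\theta)v(\an)+(1-p(\theta))w(\theta)=v^*$, and at every state gives the principal $p(\theta)u(\an,\theta)+(1-p(\theta))h_\theta(w(\theta))\ge u(\alpha^*_\theta,\theta)$. Hence $\alpha'$ is feasible for~\eqref{model:dm1} with objective value at least that of $\alpha^*$; optimality of $\alpha^*$ then forces $\int_\Theta\big(u(\alpha'_\theta,\theta)-u(\alpha^*_\theta,\theta)\big)\dif\mu=0$ with a pointwise-nonnegative integrand, so the integrand vanishes for $\mu$-a.e.\ $\theta$, and the equality case of Jensen yields, for $\mu$-a.e.\ $\theta$, that $v(a)=w(\theta)$ for $\beta_\theta$-a.e.\ $a$.

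It then remains to read off the veto form. For $\mu$-a.e.\ $\theta$, $\beta_\theta$ is carried by the level set $\set{a\ne\an : v(a)=w(\theta)}$, on which $v$ --- and therefore $u(\cdot,\theta)$ --- is constant; thus $\alpha^*_\theta$ and $\alpha'_\theta$ induce the same payoff distribution for both players at every state and differ only by a relabeling within a $v$-level set, so $\alpha^*$ coincides with the veto mechanism $\alpha'$ up to this payoff-irrelevant relabeling (and literally when $v$ is injective on $A\setminus\set{\an}$), which gives the claim.

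The main obstacle is not the economics but the measurability bookkeeping. I need the disintegration $\theta\mapsto(p(\theta),\beta_\theta)$ and the map $\theta\mapsto w(\theta)$ to be Borel --- both following from Borel measurability of $\alpha^*$ together with integrability of the relevant payoffs --- and, most delicately, a measurable selection $\amap$ of the level-set correspondence $\theta\mapsto\set{a\ne\an : v(a)=w(\theta)}$; here I would invoke a measurable-selection/uniformization theorem (Kuratowski--Ryll-Nardzewski, or Jankov--von Neumann for a universally measurable selection), using measurability of $v$ and that $A$ and $\Theta$ are Borel subsets of Euclidean space. I would also check the integrability conditions behind Jensen ($\int|v|\dif\beta_\theta<\infty$ and $\int|h_\theta\circ v|\dif\beta_\theta<\infty$), which are immediate from the standing requirement that a mechanism have integrable expected payoffs.
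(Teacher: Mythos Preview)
Your proposal is correct and follows essentially the same approach as the paper: decompose $\alpha_\theta$ into its status-quo mass and the conditional non-status-quo distribution, apply Jensen's inequality via the strictly concave $h_\theta$, and use connectedness of $v(A\setminus\{a_0\})$ to realize the conditional mean by a single option. You are more careful than the paper about measurability (the disintegration and the measurable selection of $\tilde a$) and about the equality case of Jensen when $v$ is not injective on $A\setminus\{a_0\}$; the paper simply asserts equality forces $\alpha_\theta=(p_0\circ a_0,(1-p_0)\circ a')$, whereas you correctly note this holds only up to relabeling within a $v$-level set.
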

\begin{proof}
  Principal's maximization problem~\eqref{model:dm1} can be solved sequentially.
  First, fix some agent utility $\hat v$ and solve the following optimization problem
  \begin{equation}
  \pi(\hat v) := 
  \max_{m} 
  \int_\Theta u(m(θ), θ) \, d μ \text{ subject to } v(m(θ)) = \hat v \text{ for all } θ.
  \tag{$\mathcal{M}_{\hat v}$}
  \label{op:fixed-vhat}
  \end{equation}    
  Then, solve for $v^* \in \arg\max_{\hat v ∈ \RR} \pi(\hat v)$ that maximizes principal's ex ante payoff.
  For our purpose here, it suffices to show that for all $\hat v \in v(A)$, 
  any direct mechanism $\hat{m}$ that solves the problem \eqref{op:fixed-vhat} must be a veto mechanism.
  
  Fixing some $\hat v \in v(A)$, mechanism $\hat{m}$ is a solution to \eqref{op:fixed-vhat}
  only if, for $μ$-almost all $θ$, $\hat{m} (\state)$ solves the following maximization problem: 
  \begin{equation}
    \max_{\alpha\in Δ (A)} u(\alpha, θ)
      \text{ subject to $v(\alpha) = \hat{v}$.%
        \footnotemark} 
      \label{eq:ss-condition} 
      \tag{$\mathcal{M}_{\hat{v}, θ}$}
  \end{equation}
  Fixing some state $θ$, the objective function in \eqref{eq:ss-condition} can be written as
  $$
  u(α, θ) = \int_{A \setminus \set{a_0}}  u(a, θ)  \, d α + p_0 u (a_0, θ)
  $$ 
  where $p_0$ is the probability assigned to $a_0$ under $α$.
  Further,
  $$
  \begin{aligned}
  \int_{A \setminus \set{a_0}}  u(a, θ)  \, d α 
    & = (1 - p_0) \int_{A \setminus \set{a_0}}  h_θ \big( v(a) \big) \, d \big(\frac{α}{1 - p_0} \big)   \\
    & \le (1 - p_0) h_θ \Big( \int_{A \setminus \set{a_0}} v(a) \, d \big(\frac{α}{1 - p_0} \big) \Big)  \quad \text{(because $h_θ$ is concave)} \\
    & = (1 - p_0) h_θ ( v(a')) \text{ for some $a' \in A \setminus \set{a_0}$}  \quad \text{(because $v(A \setminus \set{a_{0}})$ is connected)} \\
    & = (1 - p_0) u ( a', θ).
  \end{aligned}
  $$
  Since function $h_θ$ is strictly concave, the inequality takes equality if and only if the support of 
  $α$ contains at most one option distinct from $a_0$.
  Therefore, any solution to the problem \eqref{op:fixed-vhat} must be a veto mechanism.
  \footnotetext{
  Precisely, $\hat{m}$ is a solution to maximization problem~\eqref{op:fixed-vhat} if and only if
  $v(\hat m(θ)) = \hat v$ for all $θ \in {Θ}$ and
  there exists some subset $\hat{Θ}$ of $Θ$ with $μ (\hat{Θ}) = 1$ such that $\hat{m}$ solves  
  \eqref{eq:ss-condition} for all $θ \in \hat{Θ}$.
  }
\end{proof}

The optimality of veto mechanisms relies on
the assumption that the principal is more risk-averse than the agent towards 
non-status quo options. 
If the principal is less risk-averse than the agent, the optimality of veto mechanisms may fail. Below, we provide an example to illustrate this scenario.
Suppose $Θ = [0,1] \subseteq ℝ$ and $A = (-\infty, 1] \cup \set{a_0} \subseteq ℝ$,
where the status quo option is denoted by some real number $\an > 1$. 
The principal has absolute loss payoff: $u(a,\theta) = - |a-\theta|$ for $a \in ℝ$ and
$u(a_0, \theta) = u_0 < 0$ for all $\theta$.
The agent has quadratic loss payoff $v(a) = -(1-a)^2$ for $a \in (-\infty,1]$ and $v(a_0) = v_0 \in ℝ$.
Then, the principal can achieve an outcome arbitrarily close 
to the first-best outcome without using veto at all.%
\footnote{%
    This almost first-best outcome is essentially the same with the example of non-optimality of deterministic allocations in Kováč and  Mylovanov~(2009, Section~4).
    A related example can be found in Example~E.1 in Kartik et al.~(2021).}
For arbitrarily small $\varepsilon > 0$, given the reported state $\theta$
the principal chooses a lottery of actions with
expectation being $\theta - \varepsilon$,
its support lying in $(-\infty, \theta]$ and 
the appropriate variance such that constraints~\eqref{eq-ic} hold.

\subsection{Remarks on the status quo option}

The proof of \Cref{prp:main} can be extended to show that
if the principal is more risk-averse than the agent towards 
\textit{all} options at all states (and $v(A)$ is connected),
an optimal mechanism must be deterministic.
In this case, the principal generally cannot elicit any information from the agent due to the state-independent agent preference.
Therefore, for the delegation problem to be non-trivial,
despite the fact that
the principal is more risk-averse than the 
agent towards all non-status quo options at all states,
adding the status quo option will make the principal no longer more risk-averse than the agent.
In other words, some player must have different 
risk attitudes between the status quo option and non-status quo options.

In behavioral economics,
it is well-documented that 
individuals perceive qualitative differences between
choosing the status quo option and a non-status quo option
(e.g., Kahneman and Tversky, 1979).
Additionally, individuals often feel more responsible when enacting a non-status quo option (Bartling and Fischbacher, 2012).
In our motivating example, 
the policy maker is insensitive to the outcome of maintaining the status quo in that choosing $\an$ yields the same payoff at both states.
One possible reason is that
the policy maker feels more responsible for the 
outcome of a non-status quo policy
than that of maintaining the status quo.


\section{Characterization}
We characterize the optimal mechanism in a 
one-dimensional setting.
Assume $Θ = [0,1] \subseteq \RR$ and $A = \set{\an} \cup [-M, M] \subset \RR$, where
$M$($>1$) is a sufficiently large positive number
and the status quo option is denoted by some real number
$\an < -M$. The state $\theta$ follows some full-support distribution $\mu \in \Delta(\Theta)$.
Agent's payoff function is
$v(a) = a \text{ for } a \in [-M,M]$
and $v(\an) = \vn \in \RR$.
Principal's payoff function is
$u(a, \theta) = - (a - \theta)^2 \text{ for } a \in [-M,M]$
and $u(\an, θ) = \un$ for all $θ \in Θ$.
While the principal's payoffs from a non-status quo option depend on the state, those from the status quo option do not.%
\footnote{
  The assumption that principal's payoffs from $\an$ are state-independent  
  aligns with many real-life scenarios.
  For instance, in a customer-salesperson interaction,
  the customer ``delegates''
  his purchasing decision to the
  salesperson, and
  the state captures the degrees to which 
  the available products match the customer's needs. 
  As the value of the alternative use of money for 
  the customer does not depend on how well the products suit her,
  the payoffs from rejecting the salesperson's recommendation
  are state-independent.
  In the literature of contract theory, it is also usually assumed that
  the value of outside option from not signing the contract
  is state-independent for the principal.
  For example, in an investor-entrepreneur relationship,
  the participation constraint requires that the 
  expected return to the investor (principal) must cover the 
  opportunity cost of funds, and the opportunity cost 
  is usually measured by a constant interest rate (e.g., Gale and Hellwig, 2003).}

We impose the following restrictions. Assume $\un \le 0$.
Otherwise, the status quo option $\an$ always yields the unique highest payoff for the principal among all options, and the principal trivially chooses $\an$ regardless of the state. 
We also assume $\vn \le 0$ when characterizing the optimal mechanism and discussing comparative statics.
Later, we show that allowing $\pv$ being positive can drastically affect the
optimal mechanism (\Cref{sec-exten}).

\subsection{Analysis}\label{sec-characterizing}
We first briefly discuss the extreme case when $\pu=0$.
In this case, the principal can never do better than opting for 
the status quo option, and there exists a \textit{pooling equilibrium} in which
he maintains the status quo regardless of the state.
Nevertheless, there also exists a \textit{fully separating equilibrium},
in which the agent is strictly better off. 
In the separating equilibrium, the principal commits to the veto mechanism
$(\amap^*, \pmap^* )$ where $\amap^*(\theta) = \theta$ and $\pmap^*(\theta) = {\theta}/{(\theta - \pv)}$.
That is, given the agent's reported state $θ$,
the principal uses veto with probability $\frac{θ}{θ - \pv}$
and chooses option $\amap( θ ) = θ$ with the complementary probability.
This mechanism is incentive-compatible, and the principal's ex ante payoff 
is the same as that of the pooling equilibrium.
The separating equilibrium is the limit of
equilibria with $\pu< 0$, as discussed later.

From now on, assume $\pu< 0$.
By \Cref{prp:main}, we focus on veto mechanisms and 
the principal's maximization problem \eqref{model:dm1} reduces to
\begin{equation}\label{model:dm_M'}\tag{$\mathcal{M'}$}
  \begin{split}
  &\max_{\{ \pmap(\theta), \amap(\theta) \}_{\theta \in \Theta}, \hat{v}} 
  \int_\Theta  \big(\pmap(\theta) \pu- (1 - \pmap(\theta) )  (\amap(\theta) -\theta )^2 \big)  \, d μ \\
      \text{ subject to } 
  &\text{ (i) } 
    \forall\theta \in \Theta, \, \pmap(\theta) \pv+ (1 - \pmap(\theta)) \amap(\theta) = \hat{v};\\
  &\text{ (ii) } 
    \forall \theta \in \Theta, \pmap(\theta) \in [0,1] \text{ and } \amap(\theta) \in [-M, M].  
  \end{split}  
\end{equation}
Perturbing $\pmap({\state'})$ and $\amap({\state'})$
at some state $\state'$ will not impact the validity of 
constraints~\eqref{eq-ic} for other states $\theta \neq \theta'$,
as long as the agent's payoff remains fixed at $\hat{v}$ when reporting $\theta'$. 
Due to this observation,
we first fix some agent utility $\hat{v}$ and solve the principal's maximization problem 
\textit{state-wisely}. Then, we solve for the optimal $\hat{v}$.

The original problem~\eqref{model:dm_M'} is reduced to the following two-step maximization problem:
\begin{enumerate}
\item
  For each $θ \in \Theta$ and each agent's utility $\hat{v}\in [\vn,1]$, 
  solve the following optimization problem:
  \begin{equation}\label{model:dm3}\tag{$\mathcal{M}_1$}
  \pi (\hat{v}, \theta) \equiv 
    \max_{\pmap(\theta), \amap(\theta)}
    \pmap(\theta) \pu- (1 - \pmap(\theta))  (\amap(\theta) -\theta)^2 
  \end{equation} \[
  \begin{split}
  \text{ 
  subject to } \text{(i) } &
  \pmap(\theta) \pv+ (1 - \pmap(\theta)) \amap(\theta) = \hat{v};\\
  \text{(ii) } & \pmap(\theta) \in [0,1] \text{ and } \amap(\theta) \in [-M,M].
  \end{split}
  \]
\item
  Solve for the optimal $\hat{v}$:
  \begin{equation}
    \max_{\hat{v}\in [\pv, 1] } \int_\Theta \pi (\hat{v}, \theta) \, d \mu. \tag{$\mathcal{M}_2$} \label{model:dm4}
  \end{equation}
\end{enumerate}
We solve the sequential problems \eqref{model:dm3} and \eqref{model:dm4} using the standard Lagrangian method.
The optimal veto mechanism is summarized in \Cref{prp-char}.

\begin{proposition}\label{prp-char}
Assume $\pv \le 0$.
\begin{enumerate}[(a)]
\item \label{itm:non-trivial}
  If $(1 - \pv)^2 + \pu> (\mathbb{E}_μ (\theta) - \pv)^2$, then define
  {$\eta (θ) \equiv \sqrt{ (θ - \pv) ^ 2 + \pu} + \pv$} for
  $θ \ge \sqrt{- \pu} + \pv$ and the optimal veto mechanism is 
\begin{equation}
\begin{aligned}
    \amap^*(\theta) = 
          \begin{cases}
          {\bar a},  & \text{ if } \theta < { \bar \theta} ;\\
          \eta(\theta) , & \text{ if } \theta \ge \bar{\theta},
          \end{cases}
    \qquad %
      \pmap^*(\theta) = 
        \begin{cases}
          0,  & \text{ if } \theta <{\bar{\theta}} ; \\
          \frac{\eta(\theta) -  {\bar a}}{\eta(\theta) - \pv} , & \text{ if } \theta \ge {\bar{\theta}},
        \end{cases}
    \label{eq:optim-veto} 
\end{aligned}
\end{equation}
where $\bar \theta \equiv \eta^{-1} (\bar a)$ and $ {\bar a}$ is determined by
{$\int_{\eta \inv (\bar a)}^{1} (\eta (\theta) - \bar a) \dif μ(\theta) + \bar a - \mathbb{E}_μ (\theta) = 0$}.
Moreover, $\bar a \in (0, \mathbb{E}_μ(\theta)).$

\item \label{itm:no-value}
  If $(1 - \pv)^2 + \pu\leq (\mathbb{E}_μ (\theta) - \pv)^2$, then
  the optimal veto mechanism is characterized by 
  $\amap^*(θ) = \mathbb{E}_μ (\theta)$ and
  $\pmap^*(θ) = 0$ for all $θ \in Θ$;
  that is, the principal chooses
  $\mathbb{E}_μ (\theta)$ regardless of the state.
\end{enumerate}
\end{proposition}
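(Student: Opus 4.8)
The plan is to solve the two-stage problem \eqref{model:dm3}--\eqref{model:dm4} explicitly via the Lagrangian method. The first stage is to analyze, for fixed $\hat v$ and $\theta$, the state-wise problem \eqref{model:dm3}. Using constraint (i) I would eliminate $\amap(\theta)$, writing $\amap(\theta) = \frac{\hat v - \pmap(\theta)\pv}{1 - \pmap(\theta)}$, so that the objective becomes a one-variable function of $p \equiv \pmap(\theta) \in [0,1]$. After simplification the objective should take the form $p \pu - \frac{(\hat v - p\pv - (1-p)\theta)^2}{1-p}$; I expect the first-order condition in $p$ to be a quadratic (or to reduce to one after clearing the $1-p$ denominator), whose interior root—when it lies in $(0,1)$—yields the action $\eta(\theta) = \sqrt{(\theta-\pv)^2 + \pu} + \pv$ and veto probability $\frac{\eta(\theta) - \hat v \text{ (or } \bar a)}{\eta(\theta) - \pv}$. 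The key qualitative fact to extract is a cutoff structure: for $\theta$ small relative to $\hat v$ the constraint $\pmap(\theta)\ge 0$ binds (no veto, $\amap(\theta) = \hat v$), while for larger $\theta$ an interior solution with positive veto is optimal; the switch occurs precisely where $\eta(\theta) = \hat v$, i.e. at $\theta = \eta^{-1}(\hat v)$. I would also need to check the boundary $\amap(\theta) \le M$ is slack, which holds since $M$ is taken large, and record $\pi(\hat v,\theta)$ in closed form on each branch.

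The second stage is to optimize $\int_\Theta \pi(\hat v,\theta)\,d\mu$ over $\hat v \in [\pv,1]$. Here I would differentiate under the integral sign; by the envelope theorem the derivative of $\pi(\hat v,\theta)$ with respect to $\hat v$ equals the Lagrange multiplier $\lambda(\hat v,\theta)$ on constraint (i), so the first-order condition for $\hat v$ reads $\int_\Theta \lambda(\hat v,\theta)\,d\mu = 0$. Computing $\lambda$ on each branch (it is constant in $\theta$ on the no-veto region and state-dependent on the veto region) and setting the integral to zero should reproduce exactly the stated equation $\int_{\eta^{-1}(\bar a)}^1 (\eta(\theta) - \bar a)\,d\mu(\theta) + \bar a - \E_\mu(\theta) = 0$, with the identification $\bar a = \hat v^*$. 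This gives part (a). For the sign statement $\bar a \in (0, \E_\mu(\theta))$, I would argue by evaluating the left-hand side of the defining equation at $\bar a = 0$ and at $\bar a = \E_\mu(\theta)$: at $\bar a = \E_\mu(\theta)$ the integral term is nonnegative (indeed positive whenever the veto region is nonempty, which is what the hypothesis $(1-\pv)^2 + \pu > (\E_\mu(\theta)-\pv)^2$ guarantees), so the expression is positive there; at $\bar a = 0$ it should be negative; monotonicity of the expression in $\bar a$ then pins down a unique root strictly between. The hypothesis of part (a) is exactly the condition that $\eta^{-1}(\bar a) < 1$, i.e. that vetoing is used on a set of positive measure; I would verify this equivalence by plugging $\theta = 1$ and $\bar a = \E_\mu(\theta)$ into the relevant inequalities.

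For part (b), when $(1-\pv)^2 + \pu \le (\E_\mu(\theta)-\pv)^2$, the plan is to show the veto region is empty: the interior (positive-veto) solution is never better than the no-veto solution $\amap(\theta) = \hat v$ for any $\theta \in [0,1]$ and any admissible $\hat v$, so the problem collapses to choosing a single deterministic action $a$ for all states, and $\int_\Theta -(a-\theta)^2\,d\mu$ is maximized at $a = \E_\mu(\theta)$. I expect the main obstacle to be the first stage: correctly handling the $1-p$ denominator in the objective (including the degenerate case $p \to 1$, i.e. $\amap$ undefined, which corresponds to full veto and must be compared as a limit), carefully distinguishing the parameter regions in which the stationary point is interior versus on the boundary $p=0$, and verifying the second-order / global-optimality condition so that the stationary point is genuinely the maximizer rather than a saddle. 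Once the closed form for $\pi(\hat v,\theta)$ and its $\hat v$-derivative are in hand, the second stage and the sign/comparative arguments are comparatively routine.
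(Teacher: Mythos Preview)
Your two-stage plan is essentially the paper's proof. The only cosmetic differences: the paper eliminates $\pmap$ via constraint~(i) and maximizes over $\amap$ subject to $\amap\ge\hat v$ (rather than eliminating $\amap$ and working in $p$), and in the second stage it differentiates the aggregated objective directly rather than invoking the envelope theorem. Either route gives the same first-order condition for $\hat v$, and your sign argument for $\bar a\in(0,\E_\mu(\theta))$ is the paper's argument verbatim.

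One genuine slip in your plan for part~(b): it is \emph{not} true that, under the hypothesis of~(b), the positive-veto branch is pointwise dominated by no-veto for every $\theta\in[0,1]$ and every admissible $\hat v$. Take $\pv=0$, $\pu=-1/2$ (so~(b) holds whenever $\E_\mu(\theta)\ge 1/\sqrt{2}$), $\hat v=0.01$, $\theta=1$: the interior-veto solution gives payoff about $-0.49$, while no-veto gives $-(0.01-1)^2\approx-0.98$. The correct argument is purely second-stage. The hypothesis of~(b) says $\eta(1)\le\E_\mu(\theta)$; then the left side of your defining equation for $\bar a$, which is increasing in $\hat v$, equals $\eta(1)-\E_\mu(\theta)\le 0$ at $\hat v=\eta(1)$ and hence has no zero on $(\pv,\eta(1)]$. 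The aggregate objective is therefore increasing on that range, forcing the optimum into $[\eta(1),1]$, where the cutoff $\eta^{-1}(\hat v)\ge 1$ and the veto region is empty; on that branch you are simply maximizing $-\int(\hat v-\theta)^2\,d\mu$, giving $\hat v^*=\E_\mu(\theta)$. So part~(b) follows from the $\hat v$-optimization pushing $\hat v^*$ past $\eta(1)$, not from a pointwise comparison.
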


\begin{proof}
See \Cref{app-a2}.  
\end{proof}

{Since the principal can always choose his ex ante favorite
option by trivially
delegating a singleton set, 
he strictly benefits from
eliciting the agent's private information
when a non-trivial mechanism is optimal, as in
case \ref{itm:non-trivial} of \Cref{prp-char}.}
\Cref{fig-optim-separating} illustrates the optimal veto mechanism
in this case:
the principal's choices are pooled at the deterministic action $\bar a$ when the state is below the threshold $\bar \theta$;
when the state is above the threshold $\bar \theta$,
the principal uses veto with probability 
$\frac{\eta(\theta) - \bar a}{\eta(\theta) - \pv}$ and chooses 
$\eta(\theta)$ with the complementary probability.
In case~\ref{itm:no-value}, the principal trivially delegates $\set{\E_μ(θ)}$.

\begin{figure}
\centering
\begin{subfigure}{0.4\textwidth} \centering
  \begin{tikzpicture}[domain=0.001:1, scale=4, xscale=1,every node/.style={scale = 0.8}] 
    \draw[->,] (-0.02,0) node[left]{$0$} -- (1.1,0) node[right] {$\theta$}; 
    \draw[->,] (0,-0.1) -- (0,1.1) node[above] {$\amap(\theta)$};
    \draw[very thick, domain=0.48:1]   plot (\x,{ ((\x + 1) *(\x + 1) -1.4 )^(1/2) - 0.7}); 
    \draw[very thick]  (0, 0.19) node [left] {$\bar a$} -- (0.48, 0.19); 
    \draw[dotted, thick] (0.48, 0) node [below] {$\bar\state$} -- (0.48,0.2);
    \draw[dotted,thick] (1,0) -- (1,1);
    \draw[dotted, thick] (0,0) -- (1,1);
    \draw[dotted,thick] (0,1) node [left, yshift = 0.1cm] {$1$}-- (1,1);
    \draw [thick] (1, 0) node [below] {1}-- (1, 0.02);
  \end{tikzpicture}
  \caption*{Action $\amap^*(θ)$}
\end{subfigure} 
\hspace{1.2cm}	
\begin{subfigure}{0.4\textwidth}\centering  
  \begin{tikzpicture}[domain=0.001:1, scale=4, xscale=1, every node/.style={scale = 0.8}] 
    \draw[->,] (-0.02,0) node[left]{$0$} -- (1.1,0) node[right] {$\theta$}; 
    \draw[->,] (0,-0.1) -- (0,1.1) node[above] {$\pmap(\theta)$};
    \draw[dotted,thick] (1,0) -- (1,1);
    \draw[dotted,thick] (0,1) node [left] {$1$}-- (1,1);
    \draw [] (1, 0) node [below] {1}-- (1, 0.02);
    \draw[very thick, domain=0.33066:1]   plot (\x,{1 - 0.7/(sqrt((\x + 0.5) *(\x + 0.5) -0.2 ))}); 
    \draw[very thick]  (0, 0) -- (0.33066, 0) node [below]{$\bar\state$};
  \end{tikzpicture}
  \caption*{Veto probability $\pmap^*(θ)$} 
\end{subfigure}
\caption{\label{fig-optim-separating}Optimal veto mechanism}
\end{figure}

One interpretation of the non-trivial optimal veto mechanism
is through delegating lotteries. The principal
delegates a set of lotteries
$\{ l_\theta \}_{ \theta \in [{\ttheta},1] }$ to the agent,
indexed by $\theta$, where 
\(
l_\theta = \Big(\pmap^*(\theta) \circ a_0, (1 - \pmap^*(\theta)) \circ
\eta(\theta) \Big) \text{ for } \theta \in [\bar \theta,1].
\)
The agent chooses the lottery $l_\state$ at $θ \in [\bar \theta,1]$,
and chooses $l_{\bar \theta}$ otherwise.
Another interpretation is through the veto delegation arrangement,
where the agent proposes an option and the principal either vetoes or approves.
Specifically, the principal pre-commits to the following stochastic choice
contingent on the agent's proposal:
the principal uses veto with probability $\pmap^*(\eta ^{-1}(\hat a) )$
if the proposed option $\hat a \in [\bar a, \eta(1)]$,
always approves if $\hat a \in [0, \bar a]$
and otherwise always uses veto.

\Cref{fig-optim-separating} suggests that
both $\amap^*(\theta)$ and $\pmap^*(\theta)$ are increasing over $[\ttheta, 1]$ and, 
perhaps surprisingly, that action $\amap^*(\theta)$ is strictly lower than $\theta$ for 
all $\theta > \bar {\theta}$.
We summarize these properties
in \Cref{cor-valuable-mechanism},
which can be verified directly 
from the corresponding expressions.

\begin{corollary}\label{cor-valuable-mechanism}

\niu{Among all valuable optimal veto mechanisms,}
\begin{enumerate}[(a)]
\item \label{itm:cor-a-p}
  $\amap^*(\theta)$ and $\pmap^*(\theta)$ are weakly increasing in $\theta$;
\item \label{itm:cor-eta}
  $\amap^*(\theta)<\theta$ for all $\theta > \bar {\theta}$.
\end{enumerate}
\end{corollary}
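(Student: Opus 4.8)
The plan is to read both claims off the closed-form solution in~\eqref{eq:optim-veto}, since a valuable optimal veto mechanism is exactly the mechanism $(\amap^*,\pmap^*)$ produced by case~\ref{itm:non-trivial} of \Cref{prp-char}. Throughout I will use three facts already established there: $\pu<0$; $\pv\le 0$; and $\bar a\in(0,\E_\mu(\theta))$, so that $\bar a-\pv\ge\bar a>0$.

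First I would pin down the behaviour of $\eta(\theta)=\sqrt{(\theta-\pv)^2+\pu}+\pv$. It is continuous on its domain $[\sqrt{-\pu}+\pv,\infty)$, with $\eta(\sqrt{-\pu}+\pv)=\pv$ and $\eta(\theta)\to\infty$; and
\[
\eta'(\theta)=\frac{\theta-\pv}{\sqrt{(\theta-\pv)^2+\pu}}>0\qquad\text{for }\theta>\sqrt{-\pu}+\pv,
\]
so $\eta$ is strictly increasing. Since $\pv\le 0<\bar a$, the intermediate value theorem makes $\bar\theta=\eta^{-1}(\bar a)$ well-defined with $\bar\theta>\sqrt{-\pu}+\pv$; hence for every $\theta\ge\bar\theta$ we have $\theta-\pv>0$, $\eta(\theta)-\pv=\sqrt{(\theta-\pv)^2+\pu}>0$, and $\eta'>0$ on all of $[\bar\theta,1]$.

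For part~\ref{itm:cor-a-p}: on $[0,\bar\theta)$ both $\amap^*\equiv\bar a$ and $\pmap^*\equiv 0$ are constant; on $[\bar\theta,1]$, $\amap^*=\eta$ is increasing by the previous step, and writing $\pmap^*(\theta)=1-\frac{\bar a-\pv}{\eta(\theta)-\pv}$ shows $\pmap^*$ is increasing, because $\bar a-\pv>0$ while $\eta(\theta)-\pv$ increases. The branches agree at the kink, since $\eta(\bar\theta)=\bar a$ gives $\amap^*(\bar\theta)=\bar a$ and $\pmap^*(\bar\theta)=\frac{\bar a-\bar a}{\bar a-\pv}=0$; gluing them yields weak monotonicity on all of $[0,1]$. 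For part~\ref{itm:cor-eta}: fix $\theta>\bar\theta$; then $\theta-\pv>0$, and $\sqrt{(\theta-\pv)^2+\pu}<\theta-\pv$ because $\pu<0$, so $\amap^*(\theta)=\eta(\theta)=\sqrt{(\theta-\pv)^2+\pu}+\pv<\theta$.

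I do not expect a real obstacle: the corollary is a direct computation from the explicit formulas. The one place that needs care is the gluing at $\bar\theta$ for the global weak monotonicity in part~\ref{itm:cor-a-p} --- handled automatically by the identity $\bar\theta=\eta^{-1}(\bar a)$ --- together with confirming that $\bar\theta$ lies strictly inside the domain of $\eta$, so that $\eta'$ is finite and positive there.
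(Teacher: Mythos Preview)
Your proposal is correct and follows essentially the same route as the paper: piecewise differentiation of the closed forms in~\eqref{eq:optim-veto}, with $\eta'(\theta)>0$ giving monotonicity on $[\bar\theta,1]$ and the inequality $\sqrt{(\theta-\pv)^2+\pu}<\theta-\pv$ from $\pu<0$ giving part~\ref{itm:cor-eta}. The only additions relative to the paper are your explicit checks that $\bar\theta$ lies in the interior of the domain of $\eta$ and that the two branches agree at $\bar\theta$; these are welcome but do not change the argument.
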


\begin{proof}
  See \Cref{app-a3}.
\end{proof}

The intuition of Corollary~\ref{cor-valuable-mechanism}\ref{itm:cor-a-p} 
is as follows. 
As the state $\theta$ increases, 
the principal's preferred option gets higher
and thus $\amap^*(θ)$ is weakly increasing. 
On the other hand, to maintain the
indifference constraints \eqref{eq-ic}, 
the principal has to veto
those higher options with higher probabilities.
So $\pmap^*(\cdot)$ is weakly increasing as well.
\Cref{cor-valuable-mechanism}\ref{itm:cor-eta} claims that
in the optimal veto mechanism
the proposed option is always lower than the state,
and its intuition is as follows. First let 
$\amap(\theta) = \theta$ for $θ > \bar \state$ and 
consider a marginal decrease of the proposed option. 
The principal's utility conditional on not using
veto decreases, but that loss is of 
second-order.\footnotemark{} 
On the other hand,
constraints~\eqref{eq-ic}
imply that a lower proposal necessitates a
lower veto probability.
As the status quo option yields payoffs $\pu < 0$ for the principal,
the gain from a marginal downward deviation from the state-matching action is first-order. 
To sum up, it benefits the principal to design the
options $\amap(θ)$ lower than $\theta$ for $θ > \bar \state$.

\footnotetext{As the principal's utility function is quadratic for $a \in [-M,M]$, the first-order derivative with respect to $a$ at $a=\state$ is zero: $\frac{\partial u(a,θ)}{\partial a} \big| _{a = \state} = 0$ for all $θ > \bar \state$.}

\subsection{When are veto mechanisms valuable?}

We say a veto mechanism is \textit{valuable}
if it yields a higher payoff for the principal than that of
delegating $\set{\E_μ(θ)}$.
By \Cref{prp-char}, 
whether there exists a valuable
veto mechanism
depends on players' payoffs from the status quo option,
$\un$ and $\vn$, and the mean of prior $μ$.

\paragraph{Effects of $\un$}
As $\vn \le 0$, the status quo option $\an$ serves as a ``punishment device.''
That is, the principal uses the status quo option to balance
the incentives of the agent to select higher options. 
\Cref{prp-char} implies that 
the optimal veto mechanism is valuable
if and only if $\un > (\E _μ(θ) - \vn)^2 - (1-\vn)^2$.
Put in other words,
the status quo option serves as a valid 
punishment option only if it does not harm the
principal that much.
Indeed, when $\pu \le 2v_0 - 1$, 
there does not exist a valuable
veto mechanism no matter what the principal's prior belief is.
On the other hand,
the restriction on $\un$ for
the existence of valuable veto mechanisms
is not severe.
It can be verified that
$\an$ can serve as a valid
punishment option even when $\un$
is significantly less than the principal's payoff
from choosing his ex-ante preferred 
option $\E_μ(θ)$.\footnote{
  For an illustration, let the prior belief be uniform
  over $Θ$ and set $\vn=0$. Then the status quo
  option can serve as a valid punishment option as long as $\un$ is higher than $-3/4$, which is
  less than $-1/12$, the principal's payoff from choosing $\E_μ(\theta)$.
}

\paragraph{Effects of $\vn$}
The lower bound of $\pu$
for the existence of a valuable veto mechanism, 
$(\E _μ(θ) - \vn)^2 - (1-\vn)^2$, 
is increasing in $\pv$.
The intuition is that
when $\pv$ gets larger, 
the degree of punishment by veto becomes
less severe for the agent. 
Then the principal has to punish the agent with higher
probabilities in order to maintain the constraints~\eqref{eq-ic}. 
Therefore, the principal obtains the status quo
payoff more often, making the veto mechanism 
harder to sustain.

\paragraph{Effects of prior $μ$}
The optimal veto mechanism is valuable only if
$\pu$ is above the threshold $-(1 - \pv)^2$.
In that case, the
necessary and sufficient condition for the existence of 
valuable veto mechanisms can be written more
parsimoniously as
$\mathbb{E}_μ (\theta) < \eta(1) \equiv \sqrt{(1-\vn)^2+\un} + \vn$.
%
%
Fixing $\pu$ and $\pv$ (and hence $\eta(1)$),
the existence of valuable veto mechanisms
depends only on the mean of the prior $μ$. 
As the principal's ex-post preferred option is $a = \state$
and the agent always prefers higher actions regardless of the state, a higher $\mathbb{E}_μ(\theta)$
indicates that the two players' interests are more 
aligned ex ante. 
In this sense, \Cref{prp-char} implies that 
veto mechanisms are not valuable when
the interests of two players are sufficiently aligned 
(i.e., $\E_μ(\theta) > \eta(1)$).
This finding is in contrast with the \textit{Ally Principle} that
greater alignment leads to more discretion in the
delegation set.%
  \footnote{The Ally Principle 
  has been shown to hold in a number of models in the political economy literature (See Bendor, Glazer and Hammond~(2001) and the discussions therein).
  Holmstrom~(1980) demonstrates that
  the Ally Principle holds under general conditions when restricting to interval delegations.}
In our scenario, the principal leaves no discretion for the agent when players' preferences are sufficiently aligned, but leaves some discretion when preferences are sufficiently misaligned.

In a related paper, Kartik, Kleiner and Van Weelden~(2021) document
the invalidity of Ally Principle in a bargaining
environment---a proposer (principal) delegates 
a set of lotteries for a vetoer (agent) to choose from
while the vetoer can always choose to maintain the status quo besides the delegated options.
In Kartik et al.~(2021), the opposite of the Ally Principle is true:
greater ex-ante alignment tends to make the principal leave less discretion for the agent (i.e., the delegation set gets strictly smaller).
In a standard optimal delegation setting,
Alonso and Matouschek~(2008, Section~6.4)
illustrate that the Ally Principle
may fail when the optimal deterministic mechanism is not interval delegation.
In this paper, the Ally Principle in general does not hold---the delegation set gets neither strictly smaller nor bigger as players have greater ex-ante preference alignment.\footnote{It is worth noting that 
in most papers where the Ally Principle holds,
the degree of preferences misalignment is measured by some bias parameter and is constant across different states. In contrast, 
both Kartik et al.~(2021) 
and our paper assume one player
having state-independent preferences, 
and thus the degree of (ex-ante)
preferences alignment is belief-dependent. 
In the counter-example of Alonso and Matouschek~(2008), 
players' preference misalignment is not constant across states and is larger at higher states.}




A deeper understanding of the effects of $μ$ can be obtained by 
closely examining the threshold $\ta$.
The first-order condition determining $\ta$ permits a
graphical expression as in \Cref{fig-foc}, where
$\ta$ and $\ts \equiv \eta^{-1}(\bar a)$ are the exact
values that equalize
the areas of the two shaded regions
weighted by distribution $μ$. 
Since $\eta(1)<1$,  
when the prior distribution $μ$ puts too much mass on the 
interval $(1-\epsilon,1)$ for
a sufficiently small $\epsilon>0$,
$\bar a$ would be above $\eta(1)$
and then the principal delegates $\{\E_μ(θ)\}$. Therefore, when
$\mathbb{E}_μ(\theta)$ is sufficiently high, 
there will be no valuable veto mechanisms.
\begin{figure}
\begin{center}
  \begin{tikzpicture}[domain=0.001:1, scale=4.5, xscale=1,every node/.style={scale = 0.8}] 
  	\fill[color=gray!30] (0, 0) -- (0.2,0.2) --(0, 0.2) -- (0,0);
  	\fill[color=gray!30] (0.2,0.2) -- (0.48,0.2) -- (1, 0.915) -- (1,1) -- (0.2,0.2);
    \draw[->,] (-0.02,0) node[left]{$0$} -- (1.1,0) node[right] {$\theta$}; 
    \draw[->,] (0,-0.1) -- (0,1.1) node[above] {$\amap(\theta)$};
    \draw[very thick, domain=0.48:1]   plot (\x,{ ((\x + 1) *(\x + 1) -1.4 )^(1/2) - 0.7}); 
    \draw[very thick]  (0, 0.19) node [left] {$\bar a$} -- (0.48, 0.19); 
    \draw[dotted, thick] (0.48, 0) node [below] {$\bar\state$} -- (0.48,0.2);
    \draw[dotted,thick] (1,0) -- (1,1);
    \draw[dotted, thick] (0,0) -- (1,1);
    \draw[dotted,thick] (0,1) node [left, yshift = 0.1cm] {$1$}-- (1,1);
    \draw [thick] (1, 0) node [below] {1}-- (1, 0.02);
    \draw[dotted,] (0,0.915) node[left]{
		$\eta(1)$} -- (1,0.915) node[right] {}; 
    \draw (0.7,0.5) node [right] {$\eta(\theta)$};
  \end{tikzpicture}
\end{center}
\caption{\label{fig-foc}Pinning down the thresholds $\bar a$ and $\bar {\theta}$}
\end{figure}

\subsection{Comparative statics}\label{sec-compa}
We derive two comparative statics to demonstrate how changes in $\un$ and $\vn$ affect the optimal veto mechanism,
focusing exclusively on valuable optimal veto mechanisms.

\begin{proposition}\label{prp-cs-u}
Among valuable optimal veto mechanisms, 
if the value of status quo option for the principal $\pu$ is lower, 
then
\begin{enumerate}[(a)]
\item
  $\eta (\theta)$ decreases for each possible
  $\theta$;
\item
  $\pmap^*(\theta)$ decreases whenever 
  $\pmap^*(\theta) > 0$;
\item
  both $\bar {a}$ and $\bar {\theta}$ increase.
\end{enumerate}

\end{proposition}

\begin{proof}
  See \Cref{app-a5}.
\end{proof}

\begin{figure}
  \begin{center}
    \begin{tikzpicture}[domain=0.001:1, scale=4, xscale=1,
                      every node/.style={scale = 0.8}] 
      \draw[->, ] (0,0) node[left]{$0$} -- (1.1,0) node[right] {$\theta$}; 
      \draw[->, ] (0,0) -- (0,1.1) node[above] {$\amap(\theta)$};
       \draw[domain=0.48:1]   plot (\x,{ ((\x + 1) *(\x + 1) -1.3 )^(1/2) - 0.7});
      \draw[dashed, very thick, , domain=0.22:1]   plot (\x,{sqrt((\x + 0.6) *(\x + 0.6) -0.1 ) -0.6});  
      \draw[]  (0, 0.24) node [left, yshift=0.15cm] {$\color{black} \bar a'$} -- (0.48, 0.24); 
      \draw[dashed, very thick,  ]  (0, 0.1566) node [left, yshift = -0.15cm] {$\color{black} \bar a$} -- (0.22, 0.1566); 
      \draw[dotted, ] (0.48, 0) node [below] {$\bar \theta'$} -- (0.48,0.24);
      \draw[dotted, ] (0.22, 0) node [below, xshift=-0.1cm] {$\bar \theta$} -- (0.22,0.1566);
      \draw[dotted, ] (1,0) -- (1,1);
      \draw[dotted, ] (0,1) node [left, yshift = 0.1cm] {$1$}-- (1,1);
      \draw [ ] (1, 0) node [below] {1}-- (1, 0.02);
      \node [right] at (0,-0.3)  {(a) Action $\amap^*(\theta)$};
    \end{tikzpicture}    
    \qquad 
    \begin{tikzpicture}[domain=0.001:1, scale=4, xscale=1,
                     every node/.style={scale = 0.8}] 
          \draw[->, ] (0,0) node[left]{$0$} -- (1.1,0) node[right] {$\theta$}; 
        \draw[->, ] (0,0) -- (0,1.1) node[above] {$\pmap(\theta)$};
          \draw[dotted, ] (1,0) -- (1,1);
        \draw[dotted, ] (0,1) node [left] {$1$}-- (1,1);
        \draw [ ] (1, 0) node [below] {1}-- (1, 0.02);
        \draw[, domain=0.33:1]   plot (\x,{1 - (0.215+0.6)/(sqrt((\x + 0.6) *(\x + 0.6) -0.2 ))}); 
        \draw[dashed, very thick  , domain=0.22:1]   plot (\x,{1 - (0.1566+0.6)/(sqrt((\x + 0.6) *(\x + 0.6) -0.1 ))});

        \draw[]  (0, 0) -- (0.33, 0) node [below, ]{$\color{black}\bar \theta'$};
        \draw[dashed, very thick, ]  (0, 0) -- (0.22, 0) node [below, ]{$\color{black}\bar \theta $};
          \node [right] at (-0.1,-0.3)  {(b) Veto probability $\pmap^*(\theta)$};
    \end{tikzpicture}
  \end{center}
  \caption{Changes of $\amap^*(\theta)$ and $\pmap^*(\theta)$ when $\un$ varies}
  \label{fig-compara-u}
\end{figure}

\Cref{fig-compara-u} illustrates how $\amap^*(\theta)$ and $\pmap^*(\theta)$ change as principal's payoffs from the status quo option vary, with the dashed curves representing the case for a higher $\un$.
The intuition for \Cref{prp-cs-u} is as follows.
With a lower $\pu$,
the principal is less willing to maintain the status quo, 
leading to a decrease in the veto probability $\pmap^* (\theta)$ and an increase in the cutoff $\bar{\theta}$.
To maintain the indifference constraints~\eqref{eq-ic},
the principal lowers the actions $\eta(\theta)$.
The combination of a higher $\bar{\theta}$ and lower $\eta(\theta)$ 
leads to an increase of $\bar a$.


In \Cref{prp-cs-v}, we derive comparative statics concerning the value of the status quo option for the agent.
Unlike the case with varying $\un$, 
changes of the optimal veto probability 
function $\pmap^*(\theta)$ may or may not be uniform as $\vn$ varies.%
  \footnote{This is illustrated with numerical examples in \Cref{app-a4-numeric}.}

\begin{proposition}\label{prp-cs-v}
  Among valuable optimal veto mechanisms, 
  if the value of status quo option
  for the agent $\pv$ is lower, then
  \begin{enumerate}[(a)]
  \item
    $\eta(\theta)$ increases for each possible $\theta$;
  \item
    both $\bar {a}$ and $\bar {\theta}$ decrease.
  \end{enumerate}
\end{proposition}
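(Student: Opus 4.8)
The plan is to treat $\eta$, $\bar a$, and $\bar\theta$ as functions of $\pv$ (holding $\pu$ and $\mu$ fixed) and differentiate, working throughout within the non-trivial regime of \Cref{prp-char} so that its closed-form expressions remain valid.

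For part~(a), I would write $\eta(\theta;\pv)=\sqrt{(\theta-\pv)^2+\pu}+\pv$ and differentiate in $\pv$ with $\theta$ held fixed, obtaining
\[
\frac{\partial\eta(\theta;\pv)}{\partial\pv}=1-\frac{\theta-\pv}{\sqrt{(\theta-\pv)^2+\pu}}.
\]
Because $\theta\ge\sqrt{-\pu}+\pv$ forces $\theta-\pv>0$ and $\pu<0$ gives $(\theta-\pv)^2+\pu<(\theta-\pv)^2$, the fraction exceeds $1$, so the derivative is strictly negative; this is~(a).

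For part~(b), I would set $F(a,\pv)$ equal to the left-hand side of the equation defining $\bar a$,
\[
F(a,\pv):=\int_{\eta\inv(a;\pv)}^{1}(\eta(\theta;\pv)-a)\dif\mu(\theta)+a-\E_\mu(\theta),\qquad \eta\inv(a;\pv)=\pv+\sqrt{(a-\pv)^2-\pu},
\]
so that $F(\bar a,\pv)=0$ and $\bar\theta=\eta\inv(\bar a;\pv)$, and then apply the implicit function theorem. Differentiating $F$ via the Leibniz rule, the boundary term at the lower limit vanishes because the integrand $\eta(\theta;\pv)-a$ equals zero at $\theta=\eta\inv(a;\pv)$; this leaves
\[
\frac{\partial F}{\partial a}=1-\mu([\bar\theta,1])=\mu([0,\bar\theta))>0,\qquad\frac{\partial F}{\partial\pv}=\int_{\bar\theta}^{1}\frac{\partial\eta(\theta;\pv)}{\partial\pv}\dif\mu(\theta)<0,
\]
where the strict signs use full support of $\mu$, part~(a), and $\bar\theta\in(0,1)$: note $\bar\theta=\pv+\sqrt{(\bar a-\pv)^2-\pu}>\bar a>0$ (as $\pu<0$ and $\bar a>0\ge\pv$) and $\bar\theta<1$ (as $\bar a<\E_\mu(\theta)<\eta(1;\pv)$ in the valuable case, with $\eta$ increasing). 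The positivity of $\partial F/\partial a$ also delivers uniqueness of $\bar a$ and makes the implicit function theorem applicable, giving $d\bar a/d\pv=-(\partial F/\partial\pv)/(\partial F/\partial a)>0$. Finally I would differentiate $\bar\theta=\pv+\sqrt{(\bar a-\pv)^2-\pu}$ along the solution path,
\[
\frac{d\bar\theta}{d\pv}=\Bigl(1-\tfrac{\bar a-\pv}{\sqrt{(\bar a-\pv)^2-\pu}}\Bigr)+\tfrac{\bar a-\pv}{\sqrt{(\bar a-\pv)^2-\pu}}\cdot\frac{d\bar a}{d\pv};
\]
since $\pu<0$, both the direct effect (the bracketed term) and the effect through $\bar a$ are strictly positive, so $d\bar\theta/d\pv>0$, completing~(b).

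The differentiations are routine; the points requiring care are (i) keeping track of the $\pv$-dependence of \emph{both} the integrand $\eta(\cdot;\pv)$ and the lower limit $\eta\inv(\bar a;\pv)$ in the defining equation for $\bar a$, and noticing that the two Leibniz boundary contributions cancel because the integrand vanishes at that limit, and (ii) verifying $\bar\theta\in(0,1)$, i.e.\ that we stay in the valuable, non-trivial regime, which is what makes the inequalities strict and the comparative-statics exercise well-posed. I expect (ii) to be the main obstacle.
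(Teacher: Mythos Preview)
Your proposal is correct and follows essentially the same approach as the paper: differentiate $\eta$ directly for part~(a), then differentiate the first-order condition defining $\bar a$ implicitly (with the Leibniz boundary term vanishing because the integrand is zero at $\eta^{-1}(\bar a)$) to get $\partial\bar a/\partial\pv>0$, and finally obtain the sign for $\bar\theta$ from $\bar\theta=\eta^{-1}(\bar a)$. Your write-up is in fact more careful than the paper's, which leaves the $\bar\theta$ step to one sentence of intuition and does not spell out the check that $\bar\theta\in(0,1)$.
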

  
\begin{proof}
  See \Cref{app-a4}.
\end{proof}
  
\Cref{fig-compara-v} illustrates how $\amap^*(\theta)$ changes as the agent's payoffs from the status quo option vary, with the dashed curve representing the optimal action rule for a higher $\vn$.
Intuitively, lower $\pv$ tends to lead to 
lower agent equilibrium payoff $\bar{a}$ as 
the agent's payoffs from the status quo option get lower.
Meanwhile, lower 
$\pv$ implies that punishment by veto becomes more severe for the agent. 
To maintain the indifference constraints~\eqref{eq-ic}, 
the principal tends to increase 
the separating part of optimal actions $\eta (\theta)$.
Since $\bar a$ decreases and $\eta(\theta)$ increases, the cutoff state
$\bar {\theta}\equiv \eta^{-1}(\bar {a})$ decreases.  
  
\begin{figure}
  \begin{center}
  
  \begin{tikzpicture}[domain=0.001:1, scale=4.5, xscale=1, every node/.style={scale = 0.8}] 
        \draw[->] (0,0) node[left]{$0$} -- (1.1,0) node[right] {$\theta$}; 
        \draw[-> ] (0,0) -- (0,1.1) node[above] {$\amap(\theta)$};
        \draw[color=black,  thick,   domain=0.33066:1]   plot (\x,{sqrt((\x + 0.6) *(\x + 0.6) -0.2 ) -0.6}); 
          \draw[dashed, very thick, domain=0.55:1]   plot (\x,{ ((\x + 1) *(\x + 1) -1.4 )^(1/2) - 0.7}); 
        \draw[color=black,  thick]  (0, 0.215) node [left, yshift =-0.1cm] {$\color{black}\bar a'$} -- (0.33, 0.215); 
        \draw[dashed, very thick]  (0, 0.31) node [left, yshift =0.1cm] {$\color{black}\bar a$} -- (0.55, 0.31); 
        \draw[dotted ] (0.33, 0) node [below] {$\bar \theta'$} -- (0.33,0.215);
        \draw[dotted ] (0.55, 0) node [below, xshift=0.1cm] {$\bar \theta$} -- (0.55,0.32);
        \draw[dotted ] (1,0) -- (1,1);
        \draw[dotted] (0,1) node [left, yshift = 0.1cm] {$1$}-- (1,1);
        \draw [] (1, 0) node [below] {1}-- (1, 0.02);
  \end{tikzpicture}
  
  \end{center}
  \caption{\label{fig-compara-v}
  Changes of $\amap^*(\theta)$ when $\pv$ varies}
\end{figure}

Lastly, we briefly discuss the comparative statics regarding 
the prior belief.
One might conjecture that, among valuable optimal veto 
mechanisms, the state cutoff $\ts$ changes monotonically
when usual stochastic orderings 
(such as first-order stochastic dominance and likelihood ratios) 
are imposed on the prior beliefs.
Nevertheless, this does not hold in our setting. 
In \Cref{app-a6}, we provide numeric examples demonstrating that
when $\mu_1$ dominates $\mu_2$
in the sense of likelihood ratio (and hence first-order stochastic dominance), the state cutoff induced by prior $\mu_1$ can be either higher or lower than that induced by prior $\mu_2$.

\section{Discussions}\label{sec-exten}
In the previous analysis, we have assumed $\pv \le 0$.
In this section, we discuss how a positive $\pv$ will affect the optimal mechanism.
When $\pv \ge 1$, the optimal mechanism resembles that of our main model, 
with the key difference being that $\an$
acts as a reward rather than a punishment option.
However, when $0 < \pv <1$,
the structure of the optimal mechanism may change significantly.

\subsection{Scenario 1: $\pv \ge 1$}
Suppose $\pv \ge 1$. Then the status quo option serves as a reward option
rather than a punishment option.
As \Cref{prp:main} does not depend on the
value of $\pv$, we can still focus on the class of veto mechanisms
and derive for the optimal mechanism 
by solving sequential maximization problems
\eqref{model:dm3} and \eqref{model:dm4} as in the main model. 
The only difference is that now the
agent prefers the status quo option and $\pv$ becomes the
upper bound for the agent's equilibrium payoff, 
whereas in the previous analysis $\pv$ is the lower bound.
The following proposition summarizes the optimal veto mechanism in this case.

\begin{proposition}\label{prp-ex-optimal-mechanism}
Assume $\pv \ge 1$.

\begin{enumerate}[(a)]
\item
  If $( \pv- \mathbb{E}_{\mu} (\theta))^2 < \pv^2 + \pu$, define
  $\varphi (\state) \equiv \pv- \sqrt{(\pv- \state) ^2+ \pu}$ 
  for $\state \le \pv- \sqrt{- \pu}$ and the optimal veto mechanism {$(\amap^*(\theta), \pmap^*(\theta))$} is
  \[
  \begin{aligned}
        \amap^*(\theta) = \begin{cases}
        \varphi(\theta),  & \text{ if } \theta < { \bar \theta}\\
        {\bar a}, & \text{ if } \theta \ge {\bar \theta}
      \end{cases}
   \qquad\qquad %
      \pmap^*(\theta) = \begin{cases}
        \frac{{\bar a}- \varphi (\theta) }{\pv- \varphi (\theta)},  & \text{ if } \theta < {\bar \theta}\\
        0 , & \text{ if } \theta \ge {\bar \theta}
      \end{cases}
  \end{aligned}
  \]
  where $\bar \theta = \varphi^{-1}(\bar a)$  and $\bar a$
  is determined by
  $\int_0^{\varphi^{-1}(\bar a)}  (\varphi (\theta) - \bar a) \, d\mu  + \bar a - \mathbb{E}_{\mu} (\theta)  = 0$.
  Moreover, $\bar a \in ( \mathbb{E}_{\mu}(\theta), 1).$
\item
  If $( \pv- \mathbb{E}_{\mu} (\theta))^2 \geq \pv^2 + \pu$, then the
  optimal veto mechanism is $\amap^*(\theta) = \mathbb E_{\mu} (\theta)$ and
  $\pmap^*(\theta) = 0$; that is, the principal chooses the action
  $\mathbb{E}_{\mu} (\theta)$ regardless of the state.      
\end{enumerate}
\end{proposition}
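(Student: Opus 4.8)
The plan is to rerun the two-step Lagrangian argument behind \Cref{prp-char}, reflected so that the status quo now rewards rather than punishes the agent. Since \Cref{prp:main} does not use the sign of $\pv$, we may restrict to veto mechanisms, so the principal's problem again splits into the state-wise subproblem \eqref{model:dm3} and the outer problem \eqref{model:dm4}. The one structural change is that, because $v(\an)=\pv\ge 1$ weakly exceeds every state while the principal strictly dislikes overshooting, the agent's equilibrium utility $\hat v$ is now bounded \emph{above} by $\pv$ rather than below: if $\hat v>\pv$, then $\pmap(\theta)<1$ at every state and a short computation bounds the per-state payoff above by $-(\hat v-\pv)^2$, which is ruled out exactly as the symmetric endpoint is handled in the baseline. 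So we take $\hat v\le\pv$.

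For fixed $\theta$ and $\hat v\le\pv$, substitute $t:=1-\pmap(\theta)\in[0,1]$ and use constraint (i) to eliminate $\amap(\theta)=\pv-(\pv-\hat v)/t$; the objective of \eqref{model:dm3} becomes $C-t\big(\pu+(\pv-\theta)^2\big)-(\pv-\hat v)^2/t$ for some $C=C(\theta,\hat v)$ not depending on $t$. Maximizing over $t\in[0,1]$ (the bound $\amap(\theta)\in[-M,M]$ is slack for $M$ large) yields three regimes: if $\pu+(\pv-\theta)^2\le 0$, i.e.\ $\theta\ge\pv-\sqrt{-\pu}$, the expression is increasing in $t$, so $t=1$, $\pmap(\theta)=0$, $\amap(\theta)=\hat v$; if $\pu+(\pv-\theta)^2>0$ but the interior critical point $t^{*}=(\pv-\hat v)/\sqrt{(\pv-\theta)^2+\pu}$ exceeds $1$, again $t=1$; otherwise the optimum is $t^{*}\in(0,1)$, which gives exactly $\amap(\theta)=\varphi(\theta)=\pv-\sqrt{(\pv-\theta)^2+\pu}$, independent of $\hat v$, and $\pmap(\theta)=1-t^{*}$, equal to $(\bar a-\varphi(\theta))/(\pv-\varphi(\theta))$ when $\hat v=\bar a$. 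Collecting the cases, the status quo gets positive weight exactly on $\{\theta<\varphi\inv(\hat v)\}$; using that $\varphi$ is strictly increasing on its domain and that $\varphi(\theta)>\theta$ there (this is where $\pu<0$ enters), $\pi(\hat v,\theta)$ takes the stated piecewise form, pooling at $\amap(\theta)=\hat v$ for $\theta\ge\varphi\inv(\hat v)$.

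It remains to solve \eqref{model:dm4}. By the envelope theorem, $\partial_{\hat v}\pi(\hat v,\theta)$ equals $2(\theta-\varphi(\theta))$ on $\{\theta<\varphi\inv(\hat v)\}$ and $2(\theta-\hat v)$ on $\{\theta\ge\varphi\inv(\hat v)\}$, with no contribution from the moving cutoff since $\varphi(\varphi\inv(\hat v))=\hat v$; rearranging gives $\tfrac{d}{d\hat v}\int_{\Theta}\pi\,d\mu=-2\,G(\hat v)$, where $G(\hat v):=\int_0^{\varphi\inv(\hat v)}(\varphi(\theta)-\hat v)\,d\mu+\hat v-\mathbb{E}_{\mu}(\theta)$. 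Since $G'(\hat v)=\mu\big((\varphi\inv(\hat v),1]\big)\ge 0$, the outer objective is concave and its maximizer is the root of $G$ — the equation defining $\bar a$. The two parts now split on the sign of $\varphi\inv(\mathbb{E}_{\mu}(\theta))$: this is positive precisely when $(\pv-\mathbb{E}_{\mu}(\theta))^2<\pv^2+\pu$, in which case $G(\mathbb{E}_{\mu}(\theta))<0$ (full support of $\mu$) while $G(1)>0$ (again using $\varphi(\theta)>\theta$), so the root $\bar a$ lies in $(\mathbb{E}_{\mu}(\theta),1)\subset(\mathbb{E}_{\mu}(\theta),\pv)$ and the separating region $[0,\varphi\inv(\bar a))$ is a nonempty strict subset of $[0,1]$ — part (a); in the complementary case the separating region is empty at $\hat v=\mathbb{E}_{\mu}(\theta)$, so $G(\mathbb{E}_{\mu}(\theta))=0$ and the trivial mechanism of part (b) is optimal.

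The difficulty is bookkeeping rather than conceptual: one must keep the three state-wise regimes mutually consistent (in particular that $\theta<\varphi\inv(\bar a)$ forces $t^{*}<1$ and hence the interior regime), confirm that the sign of $(\pv-\mathbb{E}_{\mu}(\theta))^2-(\pv^2+\pu)$ cleanly separates the interior solution from the corner $\bar a=\mathbb{E}_{\mu}(\theta)$, and check that neither the domain restriction $\theta\le\pv-\sqrt{-\pu}$ on $\varphi$ nor the bound $[-M,M]$ binds at the optimum (they do not, since $\amap^{*}(\theta)\in[0,1]$ throughout). I expect the endpoint sign computations for $G$, which pin down $\bar a\in(\mathbb{E}_{\mu}(\theta),1)$, to be the most delicate step.
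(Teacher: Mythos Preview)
Your proposal is correct and follows essentially the same two-step strategy as the paper: solve the state-wise problem \eqref{model:dm3} to obtain $\varphi(\theta)$, then optimize over $\hat v$ via the first-order condition that defines $\bar a$. The differences are organizational rather than substantive---your reparameterization $t=1-\pmap(\theta)$ reduces the state-wise problem to the clean form $C-At-B/t$, and your envelope-theorem computation of $G$ lets you avoid the paper's explicit three-scenario case split on the sign of $\pu+(\pv-1)^2$ and $\pu+\pv^2$; the paper instead optimizes directly over $\amap(\theta)$ and handles the parameter ranges one at a time, but arrives at the same $\varphi$, the same $G$, and the same endpoint checks $G(\mathbb{E}_\mu(\theta))<0$, $G(1)>0$.
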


\begin{proof}
  See \Cref{app-a7}.
\end{proof}

\Cref{fig-semi-sepa-2} illustrates
the valuable optimal veto mechanism when $\vn \ge 1$:
the principal uses veto only when the
state is below the threshold $\bar\state$,
and those corresponding actions $\amap^*(\theta)$ are higher than $\theta$.
Also, as opposed to the case of $\vn \le 0$,
there exists no
valuable veto mechanism
when $\E_{\mu} (\theta)$ is sufficiently low 
(i.e., when players' preferences are sufficiently misaligned ex ante). 

\begin{figure}
\begin{center}  
  \begin{tikzpicture}[domain=0.001:1, scale=4, xscale=1,every node/.style={scale = 0.8}] 
      \draw[->,] (0,0) node[left]{$0$} -- (1.1,0) node[right] {$\theta$}; 
      \draw[->,] (0,0) -- (0,1.1) node[above] {$\amap(\theta)$};
      \draw[color=black, very thick, domain=0:0.6]   plot (\x,{1.5 - sqrt((1.5 - \x) *(1.5-\x) -0.2 )}); 
      \draw[color=black, dotted]  (0, 0.712) node [left] {$\color{black}\ta$} -- (0.6, 0.712); 
      \draw[color=black, very thick,]  (0.6, 0.712) -- (1, 0.712); 
      \draw[dotted,] (0.6, 0) node [below] {$\bar{\state}$} -- (0.6,0.712);
      \draw[dotted,] (1,0) -- (1,1);
      \draw[dotted, ] (0,0) -- (1,1);
      \draw[dotted,] (0,1) node [left, yshift = 0.1cm] {$1$}-- (1,1);
      \draw [] (1, 0) node [below] {1}-- (1, 0.02);
      \node [right] at (0,-0.3)  {(a) Action $\amap^*(\theta)$};
  \end{tikzpicture}    
  \qquad 
  \begin{tikzpicture}[domain=0.001:1, scale=4, xscale=1,every node/.style={scale = 0.8}] 
    \draw[->,] (0,0) node[left]{$0$} -- (1.1,0) node[right] {$\theta$}; 
    \draw[->,] (0,0) -- (0,1.1) node[above] {$\pmap(\theta)$};
    \draw[dotted,] (1,0) -- (1,1);
    \draw[dotted,] (0,1) node [left] {$1$}-- (1,1);
    \draw [] (1, 0) node [below] {1}-- (1, 0.02);
    \draw[color=black, very thick, domain=0:0.6]   plot (\x,{1 - (1.5-0.712)/(sqrt((1.5-\x ) *(1.5 - \x ) -0.2 ))}); 
    \draw[color=black, very thick,]  (0.6, 0) node [below]{$\color{black}\bar{\state}$} -- (1, 0) ;
    \node [right] at (0,-0.3)  {(b) Veto probability $\pmap^*(\theta)$};
  \end{tikzpicture}
\end{center}
\caption{Optimal veto mechanism when $\pv \ge 1$}
\label{fig-semi-sepa-2}
\end{figure}

\subsection{Scenario 2: $\pv \in (0,1)$}\label{sec:extension-discuss}
When $\pv\in (0,1)$, 
solving for the optimal
mechanism generally becomes more complicated.
Here, we characterize the optimal veto mechanism in two specific cases,
where the solutions differ significantly from those in \Cref{prp-char,prp-ex-optimal-mechanism}.

Let $\pv= 0.38$, $\pu= -0.1$ and the prior belief be the
uniform distribution over $\Theta$. 
The optimal action function $\amap^*(\theta)$ and the
optimal veto rule $\pmap^*(\theta)$ are U-shaped:
\[
  \begin{split}
  \amap^*(\theta) = 
      \begin{cases}
      \eta (\theta) & \text{ if } \theta \in [0, \underline{\theta});\footnotemark\\
      \bar {a}& \text{ if } \theta \in [\underline{\theta}, \bar{\theta}); \\
      \eta (\theta) & \text{ if } \theta \in [\bar{\theta}, 1],
      \end{cases} 
  \qquad %
     \pmap^* (\theta) = 
      \begin{cases}
      \frac{\eta (\theta) - \bar {a}}{\eta(\theta) - \pv} & \text{ if } \theta \in [0, \underline{\theta});\\
      0 & \text{ if } \theta \in [\underline{\theta}, \bar{\theta}); \\
      \frac{\eta (\theta) - \bar {a}}{\eta(\theta) - \pv} & \text{ if } \theta \in [\bar{\theta}, 1],
      \end{cases}  
  \end{split}
\]
\footnotetext{Substituting the values of $\vn$
    and $\un$ into the expression of
    $\eta(\state)$ as defined in \Cref{prp-char} 
    yields $\eta(\state)= \sqrt{(\theta -0.38) ^2 - 0.1}+ 0.38$ in this case.}%
where the approximate values are given by
$\underline{\theta} \approx 0.063$, 
$\bar{\theta} \approx 0.697$ 
and $\bar{a} \approx 0.397$.
To obtain this result, we first divide the problem into two
categories, $\av \geq \pv$ and $\av \leq \pv$, where $\av$ is the agent's expected payoff. 
For each case, we employ the two-step
method to solve \eqref{model:dm3} and \eqref{model:dm4} sequentially.
We find that it is optimal for the 
principal to set $\hat v \ge \vn$, and 
thus the status quo option serves as
a stick rather than a  carrot.
Detailed derivations are relegated to \Cref{app-a81}.

\Cref{fig-ushape} illustrates the
optimal veto mechanism:
both $\amap^*(\theta)$ and $\pmap^*(\theta)$ are U-shaped and the principal uses veto at both the higher and the lower states.
Note that when $\theta < \underline\state$, the principal's payoffs from those actions $\amap^*(\theta)$
are lower than that from 
the pooling action $\bar a$,
yet the principal finds it optimal
to take these higher actions.
The reason for this phenomenon is as follows.
When $\theta \in [0, \underline{\theta}]$,
principal's payoff from the pooling action $\bar a$ is lower than $\pu$.
It follows that
the principal has incentives to 
put more probability weight on the status quo option. As a result, the
principal must also set 
$\amap(\theta)$ higher than $\bar a$ to maintain the 
indifference constraints \eqref{eq-ic}.  

\begin{figure}
\centering

\begin{minipage}[t]{0.50\linewidth}\centering
{ 
  \begin{tikzpicture}[domain=0.001:1, scale=4, xscale=1,
                      every node/.style={scale = 0.8}] 
    \draw[->, ] (0,0) node[left]{$0$} -- (1.1,0) node[right] {$\theta$}; 
    \draw[->, ] (0,0) -- (0,1.1) node[above] {$\amap(\theta)$};
    \draw[very thick,  , domain=0.6967:1]   plot (\x,{sqrt((\x -0.38) *(\x - 0.38) -0.1 ) + 0.38}); 
    \draw[very thick,  , domain=0:0.0634]   plot (\x,{sqrt((\x -0.38) *(\x - 0.38) -0.1 ) + 0.38}); 
    \draw[dotted, ]  (0, 0.397) node [left] {$\color{black} \ta$} -- (1, 0.397); 
    \draw[very thick  ]  (0.0634, 0.397) -- (0.6967, 0.397); 
    \draw[dotted, ] (1,0) -- (1,1);
    \draw[dotted,  ] (0,0) -- (1,1);
    \draw[dotted, ] (0,1) node [left, yshift = 0.1cm] {$1$}-- (1,1);
    \draw[dotted,  ] (0.0634,0) node [below] {$\underline{\theta}$}-- (0.0634,0.397);
    \draw[dotted,  ] (0.6967,0) node [below] {$\bar{\theta}$}-- (0.6967,0.397);
    \draw [ ] (1, 0) node [below, xshift=0.1cm] {1}-- (1, 0.02);
  \end{tikzpicture}
}
\subcaption{\label{fig-surus}Action $\amap^*(\theta)$}
\end{minipage}%
\begin{minipage}[t]{0.50\linewidth}\centering 
{
\begin{tikzpicture}[domain=0.001:1, scale=4, xscale=1,
                    every node/.style={scale = 0.8}] 
  \draw[->, ] (0,0) node[left]{$0$} -- (1.1,0) node[right] {$\theta$}; 
  \draw[->, ] (0,-0) -- (0,1.1) node[above] {$\pmap(\theta)$};
  \draw[dotted, ] (1,0) -- (1,1);
  \draw[dotted, ] (0,1) node [left, yshift = 0.1cm] {$1$}-- (1,1);
  \draw [ ] (1, 0) node [below, xshift=0.1cm] {1}-- (1, 0.02);
  \draw [ ,very thick] (0.0634,0) node [below] {$\color{black} \underline{\theta}$} -- (0.6967,0) node [below] {$\color{black} \bar{\theta}$};
  \draw[very thick  , domain=0:0.06330434]   plot (\x,{1 - (0.3969344 -0.38)/(sqrt((\x -0.38) *(\x -0.38) -0.1))}); 
  \draw[very thick ] (0.06330434,0) -- (0.06330434,0.4); 
  \draw[very thick, domain=0.6967:1]   plot (\x,{1 - (0.3969344 -0.38)/(sqrt((\x -0.38) *(\x -0.38) -0.1))}); 
\end{tikzpicture}
}
\subcaption{Optimal veto probability $\pmap^*(\theta)$}
\end{minipage}%
\caption{U-shaped $\amap^*(\theta)$ and $\pmap^*(\theta)$}
\label{fig-ushape}
\end{figure}

As a final illustration,
let $\pv= \frac{1}{2}$, $\pu> - \frac{1}{4}$ and the prior belief be the
uniform distribution over $\Theta$. 
In this case, the optimal mechanism is deterministic:
\[
  (\amap^*(\theta) , \pmap^* (\theta)) = 
  \begin{cases}
  ( \pv , 0) & \text{ when } \theta \in [\pv - \sqrt{-\pu}, \pv + \sqrt{-\pu}) \\
  (a' , 1)& \text{ otherwise } 
  \end{cases}
\]
where $a'$ can be any non-status quo option distinct from $\vn$.
In other words, the principal always vetoes unless the 
proposed option is $a=v_0$.
Detailed derivations are relegated to \Cref{app-a82}.

\section{Concluding remarks}\label{sec-concl}

%
This paper is motivated by the observation that both private and public
organizations often use veto delegation to align incentives between principals
and agents. We study an optimal delegation model in which it is optimal for the
principal to use veto to elicit information from the agent.
Our findings have implications for corporate governance and legislative studies,
where veto delegation is prevalent.

The optimality of veto mechanisms hinges on two key assumptions:
state-independent agent preferences and the principal being more risk-averse
than the agent towards non-status quo options.
Further investigation into how
the optimality of veto mechanisms depends on 
these conditions could be valuable. 
Additionally, 
we have made simplifying assumptions about players' payoff functions
when characterizing the optimal mechanism.
Relaxing these assumptions and exploring the optimal mechanism 
under different conditions could prove worthwhile for future research.

\section*{References}\label{references}

\begin{CSLReferences}{1}{0}
    Aghion, P., and J. Tirole (1997), {``Formal and real authority in
    organizations,''} \emph{Journal of Political Economy}, 105(1), 1--29.
    
    Alonso, R., and N. Matouschek (2008), {``Optimal delegation,''}
    \emph{Review of Economic Studies}, 75(1), 259--293.
    
    Amador, M., and K. Bagwell (2013),
    ``The theory of optimal
    delegation with an application to tariff caps,''
    \emph{Econometrica},
    81(4), 1541--1599.

    Baron, D.P., and R.B. Myerson (1982), {``Regulating a monopolist with
    unknown costs,''} \emph{Econometrica}, 50(4), 911--930.
    
    Bartling, B., and Fischbacher, U. (2012).
    ``Shifting the blame: On delegation and responsibility.'' 
    \emph{Review of Economic Studies}, 79(1), 67--87.
    
    Bendor, J., A. Glazer, and T. Hammond (2001), {``Theories of
    delegation,''} \emph{Annual Review of Political Science}, 4(1),
    235--269.

    Chakraborty, A., and R. Harbaugh (2010), {``Persuasion by cheap talk,''}
    \emph{American Economic Review}, 100(5), 2361--82.
    
    Chen, Y. (2022), ``Dynamic delegation with a persistent state,''
    \emph{Theoretical Economics}, 17, 1589--1618.

    Dessein, W. (2002), {``Authority and communication in organizations,''}
    \emph{Review of Economic Studies}, 69(4), 811--838.

    Diehl, C., and C. Kuzmics (2021), {``The (non-)robustness of influential
    cheap talk equilibria when the sender's preferences are state
    independent,''} \emph{International Journal of Game Theory}, 1--15.
    
    Doran, M. (2010), {``The closed rule,''} \emph{Emory Law Journal}, 59, 1363.
    
    Frankel, A. (2016), ``Discounted quotas,''
    \emph{Journal of Economic Theory}, 166, 396--444.

    Gale, D., and M. Hellwig (1985), 
    ``Incentive-compatible debt contracts: The one-period problem,'' 
    \textit{Review of Economic Studies}, 52(4), 647--663.
    
    Gilligan, T. W., and K. Krehbiel (1987), ``Collective decision-making and standing committees: An informational rationale for restrictive amendment procedures,'' \emph{Journal of Law, Economics, and Organization}, 3(2), 287--335.

    ------ (1989), ``Asymmetric information and legislative rules with a heterogeneous committee,'' \emph{American Journal of Political Science}, 33(2), 459--490.
    
    Glazer, J., and A. Rubinstein (2004), {``On optimal rules of
    persuasion,''} \emph{Econometrica}, 72(6), 1715--1736.

    ------ (2006), {``A study in the pragmatics of persuasion: A game
    theoretical approach,''} \emph{Theoretical Economics}, 1, 395--410.

    Hart, S., I. Kremer, and M. Perry (2017), {``Evidence games: Truth and
    commitment,''} \emph{American Economic Review}, 107(3), 690--713.

    Kahneman, D., and A. Tversky (1979),
    ``Prospect theory: An analysis of decision under risk,''
    \textit{Econometrica}, 47(2), 263--292.
    
    Kamenica, E., and M. Gentzkow (2011), {``Bayesian persuasion,''}
    \emph{American Economic Review}, 101(6), 2590--2615.

    Kartik, N., A. Kleiner, and R. Van Weelden (2021),
    ``Delegation in veto bargaining,''
    \emph{American Economic Review}, 111(12), 4046--87.

    Kleiner, Andreas (2022),
    ``Optimal delegation in a multidimensional world,''
    Working Paper.
    \url{https://arxiv.org/abs/2208.11835}.
    
    Kolotilin, A., and A. Zapechelnyuk (2019),
    ``Persuasion meets delegation,''
    Working Paper.
    \url{https://arxiv.org/abs/1902.02628}.
    
    Kováč, E., and T. Mylovanov (2009), {``Stochastic mechanisms in settings
    without monetary transfers: The regular case,''} \emph{Journal of
    Economic Theory}, 144(4), 1373--1395.

    Krishna, V., and J. Morgan (2001), {``Asymmetric information and
    legislative rules: Some amendments,''} \emph{American Political Science
    Review}, 95(2), 435--452.

    Lipnowski, E., and D. Ravid (2020), {``Cheap talk with transparent
    motives,''} \emph{Econometrica}, 88(4), 1631--1660.

    Lubensky, D., and E. Schmidbauer (2018), {``Equilibrium informativeness
    in veto games,''} \emph{Games and Economic Behavior}, 109, 104--125.

    Marino, A.M. (2007), {``Delegation versus veto in organizational games
    of strategic communication,''} \emph{Journal of Public Economic Theory},
    9(6), 979--992.

    Martimort, D., and A. Semenov (2006), {``Continuity in mechanism design
    without transfers,''} \emph{Economics Letters}, 93(2), 182--189.

    Melumad, N.D., and T. Shibano (1991), {``Communication in settings with
    no transfers,''} \emph{RAND Journal of Economics}, 173--198.

    Mylovanov, T. (2008), {``Veto-based delegation,''} \emph{Journal of
    Economic Theory}, 138(1), 297--307.
\end{CSLReferences}
\clearpage

\appendix
\titleformat{\section}{\centering\Large\bfseries}{}{0em}{}
\renewcommand{\thesection}{\Alph{section}}
\renewcommand{\thesubsection}{\Alph{section}\arabic{subsection}}

\section{Appendix}\label{sec-app}  

The Appendix contains the proofs and derivations which have been
omitted from the main text.

\subsection{Proof of \Cref{prp-char}\label{app-a2}}
We first solve the maximization problem~\eqref{model:dm3} at state $\theta$
for some fixed agent utility $\hat{v}$.
Constraints~\eqref{eq-ic} yield:
\begin{equation} \label{eq:eg-prob-function}
  \pmap(\theta) =  \frac{\amap(\theta) - \av}{\amap(\theta) -\pv}.    
\end{equation}
Equation~\eqref{eq:eg-prob-function} implies that
$\pmap(\theta) \le 1$ is equivalent to $\av \ge \pv$ and that
$\pmap(\theta) \ge 0$ is equivalent to $\av \le \amap(\theta)$.
Substitute \Cref{eq:eg-prob-function} into principal's utility function,
and problem~\eqref{model:dm3} is reduced to:
\begin{equation} \label{eq:eg-optimization}
\max_{\amap(\theta) \in [-M , M]} \,  \Big(\frac{\amap(\theta) - \av }{\amap(\theta) -\pv} \Big) \pu - 
\Big(\frac{\av - \pv}{\amap(\theta) -\pv}\Big) (\amap(\theta) -\theta)^2  \quad
\text{subject to }  \amap(\theta) \ge \av. 
\end{equation}
The solution to problem \eqref{eq:eg-optimization} is
\begin{equation}\label{eq:optimal-a}
  \amap^*(\theta) = \begin{cases}
  \max \set{\av, \eta(\theta)} 
    & \text{ if } (\theta - \pv)^2  + \pu \ge 0\\
  \av & \text{ otherwise }     
  \end{cases}
\end{equation}
where $\eta(\theta) = \sqrt{(\theta - \pv)^2 + \pu} + \pv$.  

Then, we solve for the optimal $\av$ at state $\theta$ given the action rule specified by \Cref{eq:optimal-a}.  
Specifically, 
we derive the optimal $\av$ for different values of $\pu$ and $\pv$ in the following scenarios:
\begin{enumerate}[I.]
\item 
  {$\pu + \pv ^ 2 \ge 0$.} In this scenario, $\pu + (\theta - \pv) ^ 2 \geq 0$ for all $\theta \in \State$.
\item 
  {$\pu + \pv ^ 2 < 0$} and $\pu + (1 - \pv) ^2 > 0$. In this scenario, $\pu + (\theta - \pv)^2 < 0$ for $\theta \in [0, \sqrt{- \pu} + \pv)$ and
  $\pu + (\theta - \pv) ^2 \geq 0$ for $\theta \in [\sqrt{- \pu} + \pv , 1]$.
\item 
  {$\pu + (1 - \pv) ^2  \le 0$.} In this scenario, $\pu + (\theta - \pv)^2 \le 0$ for all $\theta \in \State$.
\end{enumerate}

\subsubsection*{Scenario I: $\pu + \pv^2 \ge 0$}
We first explicitly write out principal's optimal action $\amap^*(\theta)$ specified by 
\Cref{eq:optimal-a} for three different ranges of $\av$ (Figure \ref{fig:a_1}), and 
then find the optimal $\av$ within each range. 
Finally, we compare the solutions in different 
ranges and find the 
global maximum. 
\begin{figure}[!hptb]
  \centering
  \begin{subfigure}{0.3\textwidth}
  \centering                                                           
  {\  
      \begin{tikzpicture}[domain=0.001:1, scale=3.2, xscale=1, every node/.style={scale=0.7}] 
    \draw[->,thick] (-0.02,0) node[left]{$0$} -- (1.1,0) node[right] {$\theta$}; 
    \draw[->,thick] (0,-0.6) -- (0,1.1) node[above] {$\amap(\theta)$};
    \draw (-0.02,-0.5) node[left]{$\pv$} -- (0.02,-0.5);
    \draw[thick, domain=0:1]   plot (\x,{sqrt((\x + 0.5) *(\x + 0.5) -0.2 ) -0.5}); 
    \draw (0,-0.26) node[left]{\footnotesize $\eta(0)$};
    \draw[dotted] (1,0) -- (1,1);
    \draw[dotted] (0,0) -- (1,1);
    \draw[dotted] (0,1) node [left, yshift = 0.1cm] {$1$}-- (1,1);
    \draw[dotted] (0,0.9318) node [left, yshift = -0.1cm] {\footnotesize $\eta(1)$}-- (1,0.9318);
    \draw [thick] (1, 0) node [below] {1}-- (1, 0.02);
    \draw[->] (0.86,0.6) node[below] {$\eta(\theta)$} -- (0.8,0.7) ;
    \end{tikzpicture}
  } 
    \subcaption{$\av \in [\pv, \eta(0) ]$}
    \end{subfigure} 
  \begin{subfigure}{0.3\textwidth}
  \centering                                                           
  {\  
    \begin{tikzpicture}[domain=0.001:1, scale=3.2, xscale=1, every node/.style={scale=0.7}] 
    \draw[->,thick] (-0.02,0) node[left]{$0$} -- (1.1,0) node[right] {$\theta$}; 
    \draw[->,thick] (0,-0.6) -- (0,1.1) node[above] {$\amap(\theta)$};
    \draw (-0.02,-0.5) node[left]{$\pv$} -- (0.02,-0.5);
    \draw (0,-0.26) node[left]{\footnotesize $\eta(0)$};
    \draw[dashed, domain=0:1]   plot (\x,{sqrt((\x + 0.5) *(\x + 0.5) -0.2 ) -0.5}); 
    \draw[thick, domain=0.33066:1]   plot (\x,{sqrt((\x + 0.5) *(\x + 0.5) -0.2 ) -0.5}); 
    \draw[thick]  (0, 0.2) node [left] {$\av$} -- (0.33066, 0.2); 
    \draw[dotted] (0.33066, 0) node [below] {$\eta^{-1}(\av)$} -- (0.33066,0.2);
    \draw[dotted] (1,0) -- (1,1);
    \draw[dotted] (0,0) -- (1,1);
    \draw[dotted] (0,1) node [left, yshift = 0.1cm] {$1$}-- (1,1);
    \draw[dotted] (0,0.9318) node [left, yshift = -0.1cm] {\footnotesize $\eta(1)$}-- (1,0.9318);
    \draw [thick] (1, 0) node [below] {1}-- (1, 0.02);
    \draw[->] (0.86,0.6) node[below] {$\eta(\theta)$} -- (0.8,0.7) ;
    \end{tikzpicture}
  } 
    \subcaption{$\av \in [\eta(0), \eta (1) ]$}
    \end{subfigure} 
  \begin{subfigure}{0.3\textwidth}
  \centering                                                            
  {\  
    \begin{tikzpicture}[domain=0.001:1, scale=3.2, xscale=1, every node/.style={scale=0.7}] 
    \draw[->,thick] (-0.02,0) node[left]{$0$} -- (1.1,0) node[right] {$\theta$}; 
    \draw[->,thick] (0,-0.6) -- (0,1.1) node[above] {$\amap(\theta)$};
    \draw (-0.02,-0.5) node[left]{$\pv$} -- (0.02,-0.5);
    \draw[dashed, domain=0:1]   plot (\x,{sqrt((\x + 0.5) *(\x + 0.5) -0.2 ) -0.5}); 
    \draw (0,-0.26) node[left]{\footnotesize $\eta(0)$};
    \draw [thick] (0, 0.96) node [left] {$\av$} -- (1, 0.96) ;
    \draw[dotted] (1,0) -- (1,1);
    \draw[dotted] (0,0) -- (1,1);
    \draw[dotted] (0,1) -- (1,1);
    \draw[dotted] (0,0.9318) -- (1,0.9318);
    \draw [thick] (1, 0) node [below] {1}-- (1, 0.02);
    \draw[->] (0.86,0.6) node[below] {$\eta(\theta)$} -- (0.8,0.7) ;
    \end{tikzpicture}
  } 
  \caption{$\av \in [\eta(1), 1 ]$}
  \end{subfigure}
\caption{principal's $\amap^*(\theta)$ for different $\hat v$ when $ \pu + \pv^2 \ge 0$}
\label{fig:a_1}
\end{figure}

When $\av \in [\pv, \eta(0) ]$, principal's optimal action is
\(\amap^*(\theta)  = \eta (\theta),\)
and the veto probability is 
\(\pmap^*(\theta) = \frac{\eta(\theta) - \av}{ \eta (\theta) - \pv}.\) 
Principal's optimization problem is reduced to
\[
\max_{\av \ge 0}   \Gamma_1(\av) \equiv 
  \int_{0}^1 \Big( \frac{\eta(\theta) - \av}{\eta (\theta) - \pv} \Big) \pu - \Big( \frac{\av - \pv}{\eta (\theta) - \pv} \Big) (\eta (\theta) -\theta)^2 \dif μ.   
\]
Since 
\(
  \Gamma_1'(\av)  
      = 2 \int_{0}^1 (\theta - \pv)- \sqrt{(\theta-\pv) ^2+ \pu}  \dif μ > 0
\)      
for all $\av \in [\pv, \eta(0)]$,
the objective function $\Gamma_1 (\av)$ attains its maximum at $\av^* = \eta (0)$. 

When $\av \in [\eta(0), \eta(1) ]$, principal's optimal action is
\[
  \amap^*(\theta) = \begin{cases}
    \av , &\text{ if } \theta \leq  \eta \inv (\av); \\
    \eta(\theta), &\text{ if } \theta > \eta \inv (\av),
    \end{cases}
\]
and the veto probability is
\[
  \pmap^*(\theta) = \begin{cases}
    0, &\text{ if } \theta \leq \eta \inv (\av); \\
    \frac{\eta(\theta) - \av}{ \eta (\theta) - \pv}, &\text{ if } \theta > \eta \inv (\av).
  \end{cases}
\]
Principal's maximization problem is reduced to
\[
  \begin{split}
   \max_{\av} \Gamma_2(\av) \equiv \int_0^{\eta \inv (\av)} -  (\av-\theta)^2 \dif μ  + 
      \int_{\eta \inv (\av)}^1 \Big[\Big(\frac{\eta(\theta) - \av}{\eta (\theta) - \pv} \Big) \pu -\Big( \frac{\av - \pv}{\eta (\theta) - \pv} \Big) (\eta (\theta) -\theta)^2 \Big]\dif μ  
  \end{split} 
\]
The first-order derivative is
\( 
	\Gamma_2'(\av) =  - 2 \left( \int_{\eta \inv (\av)}^{1} \eta (\theta) - \av \dif μ  + \av - \mathbb{E}_μ (\theta)  \right).
\)
To find the sign of $\Gamma_2'(\av)$, first note that $\Gamma_2'(\av)$ is decreasing:
$\Gamma_2''(\av) =  - 2 \int_0^{\eta \inv (\av)} 1 \dif μ   < 0.$
At the two end points $\eta(0)$ and $\eta(1)$,  we have
$\Gamma_2'(\eta(0)) = - 2 \int_{0}^{1} \eta (\theta) - \theta \dif μ   > 0$
and $\Gamma_2'(\eta(1)) =  - 2 ( \eta(1) - \mathbb{E}_μ (\theta) )$.
Then,
\begin{enumerate}[noitemsep]
  \item when $\eta (1) \le \mathbb{E}_μ (\theta)$, we have $\Gamma_2'(\eta(1)) \geq 0$ and the solution is $\av^* = \eta(1)$;
  \item when $\eta(1) > \mathbb{E}_μ (\theta)$, we have $\Gamma_2'(\eta(1)) < 0$ and 
        the solution is $\av^* = \bar{v}$ where $\bar{v}$ is determined by $\Gamma_2'(\bar{v}) = 0$. 
\end{enumerate}
Furthermore, 
\begin{gather*}
  \Gamma_2'(0)  = - 2 \left( \int_{\eta \inv (0)}^{1} \eta (\theta) \dif μ  - \mathbb{E}_μ (\theta) \right) 
  = 2 \left( \int_{\eta \inv (0)}^{1}   \theta -  \eta (\theta)  \dif μ  +  \int_0^{\eta \inv (0)} \theta \dif μ \right) > 0,\\
  \Gamma_2'(\mathbb{E}_μ (\theta))  = - 2 \left( \int_{\eta \inv (\mathbb{E}_μ (\theta))}^{1} \eta (\theta) - \mathbb{E}_μ (\theta) \dif μ   \right) < 0.
\end{gather*}
Since $ \Gamma_2''(\av)  < 0$, we have $\bar{v} \in (0, \mathbb{E}_μ (\theta) )$.
  
Lastly, when $\av \in [\eta(1), 1 ]$, principal's optimal action is
$\amap^*(\theta) = \av$
and the veto probability is $\pmap^*(\theta) = 0$.
Principal's optimization problem is reduced to
$$
\max_{\av} \Gamma_3(\av) \equiv - \int_0^{1}(\av-\theta)^2 \dif μ .
$$
The first-order derivative is
\(
  \Gamma_3'(\av) =  
  -2 \left( \av - \mathbb{E}_μ (\theta) \right).  
\)
Note that $\Gamma_3''(\av) = -2 < 0 $ and
at the two end points we have: 
\[
	\Gamma_3'(1) =  -2 \left( 1 - \mathbb{E}_μ (\theta) \right)   < 0, \quad
    \Gamma_3'( \eta (1)) =  -2 \left( \eta(1) - \mathbb{E}_μ (\theta) \right). 
\]
Then,
\begin{enumerate}[noitemsep]
  \item when $\eta(1) \leq \mathbb{E}_μ (\theta)$, we have $\Gamma_3'( \eta (1)) \geq 0$
  and the solution is $\av^* = \mathbb E_μ (\theta)$; 
  \item when $\eta(1) > \mathbb{E}_μ (\theta)$, we have $\Gamma_3'( \eta (1)) < 0$
  and the solution is  $\av^* = \eta(1)$.
\end{enumerate}

To conclude, when $\eta(1) \leq \mathbb{E}_μ (\theta)$,
the optimal agent utility level within the three ranges are given by:
\[ \av^* = 
\begin{cases}
  \eta(0) & \text{ if }  \av \in [\pv, \eta(0) ] \\
  \eta(1) & \text{ if }  \av \in [\eta(0), \eta(1)  ] \\
  \mathbb E_μ (\theta)  & \text{ if }  \av \in [\eta(1) , 1] 
\end{cases}
\]
Otherwise, 
\[ \av^* = 
\begin{cases}
	\eta(0) & \text{ if }  \av \in [\pv, \eta(0) ] \\
	\bar{v} & \text{ if }  \av \in [\eta(0), \eta(1) ] \\
	\eta(1) & \text{ if }  \av \in [\eta(1)  , 1] 
\end{cases}
\]
where $\bar{v}$ is the solution to $\Gamma_2'(\bar{v}) = 0$. 

Since the ranges overlap at the corner, we
directly compare the solutions in the three ranges. 
The solution to Scenario I is as follows:
\begin{enumerate}
\item 
  If $\eta(1) \leq \mathbb{E}_μ (\theta)$, we have 
  $\av^* = \mathbb{E}_μ (\theta)$ and the optimal action is 
  $\amap^*(\theta) =  \mathbb{E}_μ (\theta)$ for all $θ$. 
\item 
  If $\eta(1) > \mathbb{E}_μ (\theta)$, we have $\av^* = \bar{v}$ where $\bar{v}$ is given by $\Gamma_2'(\bar{v}) = 0$ and the optimal action is as follows: 
  when $\theta \in [0, \eta \inv (\bar{v})]$, $\amap^*(\theta) = \bar{v}$;
  when $\theta \in ( \eta \inv (\bar{v}), 1]$, 
  $\amap^*(\theta) = (\pmap^*(\theta) \circ \an, (1 - \pmap^*(\theta)) \circ \eta(\theta))$ where 
  $\pmap^*(\theta) = \frac{\eta(\theta) - \bar{v}}{\eta(\theta) - \pv}$.
\end{enumerate}

\subsubsection*{Scenario II: $\pu+ \pv^ 2 < 0$ and $\pu+ (1 - \pv) ^2 > 0$}
\Cref{fig:a_2} illustrates the two possibles
cases of this scenario:
$\av \in [\pv, \eta(1)]$ and $\av \in [\eta(1), 1]$.
The derivation is similar to that of Scenario I 
(and therefore omitted), and the solution is the same as that of Scenario I.

\begin{figure}
  \centering
  \begin{subfigure}{0.4\textwidth}
  \centering                                                           
  {\  
    \begin{tikzpicture}[domain=0.001:1, scale=3.2, xscale=1, every node/.style={scale=0.7}] 
    \draw[->,thick] (-0.02,0) node[left]{$0$} -- (1.1,0) node[right] {$\theta$}; 
    \draw[->,thick] (0,-0.6) -- (0,1.1) node[above] {$\amap(\theta)$};
    \draw [dotted] (-0.02,-0.44) node[left]{$\pv$} -- (0.21,-0.44);
    \draw[dashed, domain=0.21:1]   plot (\x,{sqrt((\x + 0.5) *(\x + 0.5) -0.5 ) -0.5}); 
    \draw[thick, domain=0.5:1]   plot (\x,{sqrt((\x + 0.5) *(\x + 0.5) -0.5 ) -0.5}); 
    \draw[thick]  (0, 0.21) node [left] {$\av$} -- (0.5, 0.21); 
    \draw[dotted] (0.5, 0) node [below] {$\eta^{-1}(\av)$} -- (0.5,0.21);
    \draw[dotted] (1,0) -- (1,1);
    \draw[dotted] (0,0) -- (1,1);
    \draw[dotted] (0,1) node [left, yshift = 0.1cm] {$1$}-- (1,1);
    \draw[dotted] (0,0.9318) node [left, yshift = -0.1cm] {\footnotesize $\eta(1)$}-- (1,0.9318);
    \draw [thick] (1, 0) node [below] {1}-- (1, 0.02);
    \draw[->] (0.86,0.4) node[below] {$\eta(\theta)$} -- (0.78,0.52) ;
    \end{tikzpicture}
  } 
    \subcaption{$\av \in [\pv, \eta(1)]$}
    \end{subfigure} 
  \begin{subfigure}{0.4\textwidth}
  \centering                                                            
  {\  
    \begin{tikzpicture}[domain=0.001:1, scale=3.2, xscale=1, every node/.style={scale=0.7}] 
    \draw[->,thick] (-0.02,0) node[left]{$0$} -- (1.1,0) node[right] {$\theta$}; 
    \draw[->,thick] (0,-0.6) -- (0,1.1) node[above] {$\amap(\theta)$};
    \draw [dotted] (-0.02,-0.44) node[left]{$\pv$} -- (0.21,-0.44);
    \draw[dashed, domain=0.21:1]   plot (\x,{sqrt((\x + 0.5) *(\x + 0.5) -0.5 ) -0.5}); 
    \draw [thick] (0, 0.96) node [left] {$\av$} -- (1, 0.96) ;
    \draw[dotted] (1,0) -- (1,1);
    \draw[dotted] (0,0) -- (1,1);
    \draw[dotted] (0,1) -- (1,1);
    \draw[dotted] (0,0.9318) -- (1,0.9318);
    \draw [thick] (1, 0) node [below] {1}-- (1, 0.02);
    \draw[->] (0.86,0.4) node[below] {$\eta(\theta)$} -- (0.78,0.52) ;
    \end{tikzpicture}
  } 
  \caption{$\av \in [\eta(1), 1 ]$}
  \end{subfigure}
  \caption{principal's $\amap^*(\theta)$ for different $\av$ when $ \pu + \pv^2 < 0$ and $\pu + (1-\pv)^2 > 0$}
  \label{fig:a_2}
\end{figure}

\subsubsection*{Scenario III: $\pu + (1 - \pv) ^2 \le 0$}
In this scenario, principal's optimal action as described in
\Cref{eq:optimal-a} is reduced to
\(\amap^*(\theta) = \av \text{ for all } \theta \in [0,1],\)
and the veto probability is $\pmap^*(\theta )= 0$. 
Principal's maximization problem is reduced to
\[ \max_{\av}  - \int_0^{1}(\av-\theta)^2 \dif μ  \]
The first-order condition yields:
\[- \int_0^{1}2 (\av-\theta) \dif μ  = 0 \implies \av^* = \mathbb{E}_μ (\theta) \]
So, the optimal action is $\amap^*(\theta) =  \mathbb{E}_μ (\theta)$ for all $θ$.

\Cref{prp-char} summarizes the three scenarios.
Scenarios I and II are combined as they 
yield the same solution, 
with the overall prerequisite being
$\pu + (1 - \pv) ^2 > 0$.
Additionally, the condition  $\eta(1) \leq \mathbb{E}_μ (\theta)$ is equivalent to $\pu + (1 - \pv) ^2  \leq (\mathbb{E}_μ (\theta) - \pv)^2$.



\subsection{Proof of Corollary~\ref{cor-valuable-mechanism}}\label{app-a3}
Suppose the optimal veto mechanism is valuable. 
Then it has the semi-separating form as below:
\begin{enumerate}
\def\labelenumi{\arabic{enumi}.}
\item
  When $\theta \in [0, \eta ^{-1}(\ta)]$,
  $\amap^*(\theta) = \ta$
  and $\pmap^*(\theta) = 0$.
\item
  When $\theta \in (\eta ^{-1}(\ta) , 1]$,
  $\amap^*(\theta) = \eta (\theta) = \sqrt{(\theta -\pv) ^2+ \pu}+ \pv$.
  Taking derivative with respect to $\theta$ yields:
  \[\eta'(\theta) = \frac{\theta - \pv}{\sqrt{(\theta - \pv)^2} + \pu} > 0.\]
  Taking derivative of $\pmap^*(\theta) = \frac{\eta(\theta) - \ta}{\eta(\theta) - \pv}$ with respect to $\theta$ gives
  \[
  \begin{aligned}
    {{\pmap}^*} {}^\prime (\theta) = \frac{\eta'(\theta) (\eta (\theta) - \pv) - \eta'(\theta)(\eta (\theta) - \ta) }{(\eta (\theta) - \pv)^2} 
    = \frac{\eta'(\theta) (\ta- \pv) }{(\eta (\theta) - \pv)^2}
    > 0.
  \end{aligned}
  \]
\end{enumerate}
Therefore, $\amap^*(\theta)$ and $\pmap^*(\theta)$ are constant over
$\theta \in [0, \eta ^{-1}(\ta)]$ and are strictly increasing over
$\theta \in (\eta ^{-1}(\ta) , 1]$.

As for part (b), fix some $\theta \in (\eta ^{-1}(\ta) , 1]$. Since
$\pu< 0$, we have $\sqrt{(\theta -\pv) ^2+ \pu} < \theta - \pv.$
Therefore,
$\eta (\theta) = \sqrt{(\theta -\pv) ^2+ \pu}+ \pv< \theta$.

\subsection{Proof of Proposition~\ref{prp-cs-u}} \label{app-a5}

Recall $\eta (\theta) = \sqrt{\pu+ (\theta - \pv)^2} + \pv$. Partial
derivative of $\eta (\theta)$ with respect to $\pu$ gives
\[\frac{\partial \eta (\theta)}{\partial \pu} = \frac{1}{2\sqrt{\pu+ (\theta - \pv)^2}}
        > 0 \text{ for all } \theta \in [0,1].\] Therefore,
$\eta (\theta)$ increases with $\pu$ for all $\theta \in [0,1]$.

For $\ta$, recall that $\ta$ is obtained by:
\[\int_{\eta ^{-1}(\ta)}^{1} \eta (\theta) - \ta\, dμ + \ta- \mathbb{E} (\theta) = 0.\]
Taking partial derivative with respect to $\pu$ yields:
$$
\int_{\eta ^{-1}(\ta)}^{1} \frac{\partial \eta (\theta)}{\partial \pu}  \, dμ + \int_0^{\eta ^{-1}(\ta)} \frac{\partial  \ta}{\partial \pu} \, dμ= 0.
$$
Since $\frac{\partial \eta (\theta)}{\partial \pu} > 0$ for all $\theta \in [0,1]$, 
we obtain $\frac{\partial \ta}{\partial \pu} < 0$.

For $\pmap^*(\theta)$, recall that
$\pmap^*(\theta) = 1 - \frac{\ta- \pv}{\eta (\theta) - \pv}$. Taking
derivative with respect to $\pu$ gives \[\begin{aligned}
        \frac{\partial \pmap^*(\theta)}{\partial \pu}
         = - \left( \frac{ \frac{\partial \ta}{\partial \pu} (\eta (\theta) - \pv) - (\ta- \pv) \frac{\partial \eta(\theta)}{\partial \pu}}{(\eta (\theta) - \pv)^2} \right).
    \end{aligned}\] Since $\frac{\partial \ta}{\partial \pu} < 0$,
$\eta (\theta) - \pv> 0$, $\ta- \pv> 0$ and
$\frac{\partial \eta(\theta)}{\partial \pu} > 0$, we have
$\frac{\partial \pmap^*(\theta)}{\partial \pu} > 0$.

\subsection{Proof of Proposition~\ref{prp-cs-v}} \label{app-a4}

Recall $\eta (\theta) = \sqrt{\pu+ (\theta - \pv)^2} + \pv$. Partial
derivative of $\eta (\theta)$ with respect to $\pv$ yields:
\[\frac{\partial \eta (\theta)}{\partial \pv} = - \left( \frac{\theta - \pv}{\sqrt{(\theta - \pv)^2} + \pu} - 1 \right) < 0 \text{ for all } \theta \in [0,1].\]

As for $\ta$, recall that it is obtained by the FOC. Letting $\Gamma_2'$ equal to $0$ yields:
\[\int_{\eta ^{-1}(\ta)}^{1}  (\eta (\theta) - \ta) \, dμ  + \ta- \mathbb{E} (\theta) = 0.\]
Taking partial derivative with respect to $\pv$ yields:
$$
\int_{\eta ^{-1}(\ta)}^{1} \frac{\partial \eta (\theta)}{\partial \pv}  \, dμ + \int_0^{\eta ^{-1}(\ta)} \frac{\partial  \ta}{\partial \pv} \, dμ = 0.
$$
As we have established that
$\frac{\partial \eta (\theta)}{\partial \pv} < 0$ for all
$\theta \in [0,1]$, $\frac{\partial \ta}{\partial \pv} > 0$ follows.

\subsection{Veto probability $\pmap^* (θ)$ may not change uniformly as $\vn$ varies}\label{app-a4-numeric}

We use two numeric examples to illustrate
that $p^*(\theta)$ may or may not change uniformly when $\vn$ varies.
\begin{enumerate}
\item 
Suppose the density function is $g(\theta) = 1$ for $\theta \in \Theta$ and $\pu= -0.9$.
Consider the two cases:
$\vn = -0.9$ and $\vn' = -0.3$.
It follows from \Cref{prp-char} that
$p^*(\theta)$ weakly decreases  
for all $\theta \in \Theta$ as  
$\vn$ increases to $\vn'$.
The left panel of \Cref{fig-not-uniform-p} illustrates this,
where the dashed and solid curves represent the optimal veto probability under $\vn$ and $\vn'$ respectively.

\item 
Suppose the density function $g(\theta) = 1$ for $\theta \in \Theta$ and $\pu= -0.2$.
Consider the two cases:
$\vn = -0.5$ and $\vn' = -0.3$.
When the status quo option increases from $\vn$ to $\vn'$, the optimal veto probability at state $θ=1$ 
increases from around $0.42$ to around $0.47$.
Meanwhile, the cutoff also increases from 
$\ttheta \approx 0.44$ to $\ttheta' \approx 0.49$.
The right panel of \Cref{fig-not-uniform-p} illustrates this,
where the dashed and solid curves represent the optimal veto probability under $\vn$ and $\vn'$ respectively.

\end{enumerate}


\begin{figure}
\begin{center}

  \begin{tikzpicture}[domain=0.001:1, scale=6.6, xscale=1,
   ] 
 \draw[->, ] (0,0) node[left]{$0$} -- (0.66,0) node[right] {$\theta$}; 
 \draw[->, ] (0,0) -- (0,0.66) node[above] {$\tilde{p}(\theta)$};
 \draw[dotted, ] (0.6,0) -- (0.6, 0.6);
 \draw[dotted, ] (0,0.6) node [left] {$1$}-- (0.6,0.6);
 \draw [ ] (0.6, 0) node [below] {1}-- (0.6, 0.02);
 \draw[, domain=0.33:0.6]   plot (\x,{1 - (0.215+0.6)/(sqrt((\x + 0.6) *(\x + 0.6) -0.2 ))}); 
 \draw[dashed,very thick  , domain=0.22:0.6]   plot (\x,{1 - (0.1566+0.6)/(sqrt((\x + 0.6) *(\x + 0.6) -0.1 ))});

 \draw[]  (0, 0) -- (0.33, 0) node [below, ]{$\color{black}\bar \theta'$};
 \draw[dashed, very thick, ]  (0, 0) -- (0.22, 0) node [below, ]{$\color{black}\bar \theta $};
 \end{tikzpicture}
\qquad
\begin{tikzpicture}[domain=0.001:1, scale=4, xscale=1] 
        \draw[-> ] (0,0) node[left]{$0$} -- (1.1,0) node[right] {$\theta$}; 
      \draw[-> ] (0, 0) -- (0,1.1) node[above] {$\pmap (\theta)$};
        \draw[dotted ] (1,0) -- (1,1);
      \draw[dotted ] (0,1) node [left] {$1$}-- (1,1);
      \draw [thick] (1, 0) node [below] {1}-- (1, 0.02);
      \draw[dashed, very thick, domain=0.42:1]   plot (\x,{1 - (0.264+0.3)/(sqrt((\x + 0.3) *(\x + 0.3) -0.2 ))}); 
      \draw[dashed, very thick]  (0, 0) -- (0.42, 0) node [below, xshift=0.1cm]{$\color{black} \bar \theta'$};
      \draw[color=black,  domain=0.33:1]   plot (\x,{1 - (0.215+0.6)/(sqrt((\x + 0.6) *(\x + 0.6) -0.2 ))}); 
      \draw[color=black]  (0, 0) -- (0.33, 0) node [below]{$\color{black}\bar \theta$};
         \draw (1.02, 0.46) node [right] {};
      \draw (1.02, 0.58) node [right] {};
\end{tikzpicture}  
\end{center}
\caption{Changes of $\pmap^*(θ)$ when $\pv$ varies}
\label{fig-not-uniform-p}  
\end{figure}

\subsection{Insufficiency of the likelihood ratio conditions}\label{app-a6}

We use numeric examples to illustrate that a higher likelihood ratio cannot guarantee either a higher or a lower $\ta$. First, consider two distributions on $\Theta = [0,1]$ with density functions
$g_L$ and $g_H$ as follows: 
\[
g_L (\theta) = \begin{cases}
  1/3           & \text{ if } \theta \in [0,0.9];    \\
  140(1-\theta) & \text{ if } \theta \in (0.9,1].  \\
\end{cases}
\]
\[
g_H (\theta) = \begin{cases}
  1/3           & \text{ if } \theta \in [0,0.9];    \\
  140(\theta-0.9) & \text{ if } \theta \in (0.9,1].  \\
\end{cases}
\]
It follows that
$g_H$ dominates $g_L$ in the sense of likelihood ratio. Fixing
$\pu= -0.2$ and $\pv= -0.6$, the corresponding threshold values are
$\bar\theta_H \approx 0.647$ and $\bar\theta_L \approx 0.652$.
So $\ttheta_H < \ttheta_L$.

On the other hand, consider two density functions 
$\hat{g}_L$ and $\hat{g}_H$ as follows:
\[
\hat{g}_L(\theta) = 1 \text{ and }
\hat{g}_H(\theta) = 10 \theta^9, \forall \theta \in [0,1].
\]
It follows that $\hat{g}_H$ dominates 
$\hat{g}_L$ in
the sense of likelihood ratio. 
Fixing $\pu= -0.2$ and $\pv= -0.6$,
the corresponding threshold values are 
$\ttheta_H' \approx 0.970$ and
$\ttheta_L' \approx 0.424$. So $\ttheta'_H > \ttheta'_L$.

\subsection{Proof of \Cref{prp-ex-optimal-mechanism}\label{app-a7}}

The proof of \Cref{prp-ex-optimal-mechanism} is similar to that of \Cref{prp-char}.

\subsubsection*{Step 1: Pointwise
Optimization.}

Fixing $\av \leq \pv$, constraints~\eqref{eq-ic} imply
\begin{equation}\protect\hypertarget{eq-eg-prob-function2}{}{
  \pmap(\theta) =  \frac{\av - \amap(\theta)}{\pv- \amap(\theta)}.
}\label{eq-eg-prob-function2}\end{equation}
Equation~\eqref{eq-eg-prob-function2} implies that $\pmap(\theta) \le 1$ is
equivalent to $\av \le \underline v$ and that $\pmap(\theta) \ge 0$ is
equivalent to $\av \ge \amap(\theta)$. 
Substituting Equation~\eqref{eq-eg-prob-function2} into the
principal's utility, the optimization problem \eqref{model:dm3} is reduced to
\begin{equation}
\begin{split}
  &\max_{\amap(\theta) \in A}\,  \Big(\frac{\av - \amap(\theta)}{\pv- \amap(\theta)} \Big) \underline u - 
   \Big(\frac{\pv- \av}{\pv- \amap(\theta)}\Big) (\amap(\theta) -\theta)^2    \\
  &\text{ subject to }  \amap(\theta) \le \av, \forall \theta \in \Theta. 
\end{split}\label{eq-eg-optimization2}    
\end{equation}
The solution to
problem~\eqref{eq-eg-optimization2} is
\begin{equation}\protect\hypertarget{eq-optimal-a2}{}{
\amap^*(\theta) = \begin{cases}
    \av & \text{ if } (\pv- \theta)^2 + \pu< 0,  \\
    \min \{ \av, \varphi (\theta) \} & \text{ otherwise } 
\end{cases}
}\label{eq-optimal-a2}\end{equation} 
where
$\varphi (\theta) = \pv - \sqrt{( \pv- \theta)^2 + \pu}$.

\hypertarget{step-2-optimal-v.-1}{%
\subsubsection*{\texorpdfstring{Step 2: Optimal
$\av$.}{Step 2: Optimal.}}\label{step-2-optimal-v.-1}}

Based on Equation~\eqref{eq-optimal-a2}, we derive the optimal $\av$ for different values of $\pu$ and $\pv$ characterized by the following three cases:

\begin{enumerate}[(i)]
\item 
  $ \pu+ (\pv - 1) ^ 2 \ge 0$. In this case, $\pu + (\pv - \theta) ^ 2 \geq 0$ for all $\theta \in [0,1]$.
\item 
  $\pu + \pv ^ 2 > 0$ and $ \pu + ( \pv - 1) ^2 < 0$. In this case, $\pu + (\pv - \theta) ^2 \geq 0$ for $\theta \in [0, \sqrt{- \pu} + \pv]$ and
  $ \pu + (\theta - \pv) ^2< 0$ for
  $\theta \in (\sqrt{- \pu} + \pv , 1]$.
\item 
  $ \pu + \pv ^2 \le 0$. In this case, $\pu +(\pv - \theta)^2 \le 0$ for all $\theta \in [0,1]$.
\end{enumerate}

The solutions of case (i) and 
case (ii) are the same:
\begin{enumerate}
\def\labelenumi{\arabic{enumi}.}
\item
  If $\varphi(0) \geq \mathbb{E}_μ (\theta)$, $\av^*= \mathbb{E}_μ (\theta)$.  The actions are pooled at $\amap(\theta) = \mathbb{E}_μ (\theta)$.
\item
  If $\varphi(0) < \mathbb{E}_μ (\theta)$, $\av^* = \bar{v}$ where $\bar v$ is given by 
  \[
  \int_0^{\varphi ^{-1}(\bar v)} \varphi (\theta) - \bar v \, dμ + \bar v - \mathbb{E}_μ (\theta) = 0
  \]
  The action function $\amap(\theta)$ is as follows: 
  \begin{itemize}
  \item 
    when $\theta \in [0, \varphi ^{-1}(\bar v))$, the default action $a_0$ is chosen with probability $\frac{\bar v - \varphi(\theta)}{\pv- \varphi(\theta)}$ and the action $\amap(\theta) = \varphi(\theta)$ is chosen with the complementary probability;
  \item 
    when $\theta \in [\varphi ^{-1}(\bar v), 1]$,
    $\amap(\theta) = \bar v$.
  \end{itemize}
  {We could further show that $\bar v \in (\mathbb{E}_μ (\theta), 1)$.
  Let 
  $f(\hat v) =  \int_0^{\varphi ^{-1}(\hat v)} \varphi (\theta) - \hat v \, dμ + \hat v - \mathbb{E}_μ (\theta).$
  Then, $\bar v$ is the solution to $f(\hat v) = 0$.
  $f(\hat v)$ is strictly increasing:
  \[
  f'(\hat v) = \int_{\varphi ^{-1}(\hat v)}^1 1 \, d μ (\theta) > 0.
  \]
  At the two points $\hat v = \mathbb{E}_μ (\theta)$ and $\hat v = 1 $: 
  \[ \begin{split}
       f(\mathbb{E}_μ (\theta)) & = -  \int_0^{\varphi ^{-1}(\mathbb{E}_μ (\theta))}   \mathbb{E}_μ (\theta) - \varphi (\theta) \, dμ < 0 \\
      f(1) & =   \int_0^{\varphi ^{-1}(1)} \varphi (\theta) - 1 \, dμ + 1 - \mathbb{E}_μ (\theta) \\
      & = \int_0^{\varphi ^{-1}(1)} \varphi (\theta) - \theta \, dμ + \int_{\varphi ^{-1}(1)}^1 1 - \theta \, dμ >0
  \end{split} \]
  It follows that $\bar v \in (\mathbb{E}_μ (\theta), 1)$.
  }
\end{enumerate}
The condition for cases (i) and (ii) combined is $\pu + \pv^2 > 0$.
Furthermore, the condition for the pooled action $\varphi(0) \geq \mathbb{E}_μ (\theta)$ is equivalent to $( \pv- \mathbb{E}_μ (\theta))^2 \geq \pu + \pv^2$.
Therefore, 
\begin{itemize}
\item 
  When $ ( \pv- \mathbb{E}_μ (\theta))^2 \geq \pu + \pv^2 > 0$, the actions are pooled at $\amap^*(\theta) = \mathbb{E}_μ (\theta)$;
\item 
  When $(\pv- \mathbb{E}_μ (\theta))^2 < \pu + \pv^2 $, the optimal action rule $(\amap^*, \pmap^*)$ is as characterized in part~(a) of \Cref{prp-ex-optimal-mechanism}.
\end{itemize}

For case (iii), the principal optimally sets $\av^* = \mathbb{E}_μ (\theta)$, resulting in pooled action $\amap^*(\theta) = \mathbb{E}_μ (\theta)$.

\Cref{prp-ex-optimal-mechanism} summarizes the analysis above:
the cases $( \pv- \mathbb{E}_μ (\theta))^2 \geq \pu + \pv^2 > 0$ and $\pu + \pv ^2 \le 0$ both result in the pooled action and are stated in part~(b) of the proposition;
the case $(\pv- \mathbb{E}_μ (\theta))^2 < \pu + \pv^2$ corresponds to part~(a) of the proposition.

\subsection{Derivations omitted in \Cref{sec:extension-discuss}}
\hypertarget{app-a81}{%
\subsubsection{U-shaped $\amap^*(\theta)$ and $\pmap^*(\theta)$}\label{app-a81}}
The parameters under concern are $\pv = 0.38$, $\pu = -0.1$ and $g(\theta) = 1$ for all $\theta \in  [0,1]$.
\Cref{prp:main} still applies. 
We can focus on the class of veto mechanisms when searching for a principal's optimal mechanism
and decompose principal's maximization problem into two sequential problems \eqref{model:dm3} and \eqref{model:dm4}.
The solution to \eqref{model:dm3} is different when $\av \geq \pv$ and $\av \leq \pv$. And we consider the two cases separately.

\paragraph{Case I: $\av \geq \pv$.}
When $\av \geq \pv$, we have $\amap(\theta) \geq \pv$ for all $\theta \in [0,1]$. 
The solution to \eqref{model:dm3} is the same as the solution for the case $\pv \le 0$:
\[
  \amap^*(\theta) = \begin{cases}
    \av & \text{ if }  (\theta - \pv)^2  + \pu < 0,  \\
    \max \set{\av, \eta(\theta)} & \text{ otherwise } 
  \end{cases}
\]
where $\eta(\theta) = \sqrt{(\theta - \pv)^2 + \pu} + \pv$.
Plugging in the parameter values, we have
\begin{enumerate}
  \item If $\av \in [\vn, \sqrt{0.0444} + 0.38]$,
\[
  \amap^*(\theta) = \begin{cases}
    \av & \text{ if } \underline{\theta} < \theta <  \bar{\theta},  \\
     \sqrt{\theta - 0.38)^2 -0.1} + 0.38 & \text{ otherwise } 
  \end{cases}
\]
where $\underline{\theta} =0.38-\sqrt{(\av - 0.38)^2 + 0.1} $ and $\bar{\theta} = 0.38+\sqrt{(\av - 0.38)^2 + 0.1} $.
  \item If $\av \in [\sqrt{0.0444} + 0.38, \sqrt{0.2844} + 0.38]$,
\[
  \amap^*(\theta) = \begin{cases}
    \av & \text{ if } \theta <  \bar{\theta},  \\
     \sqrt{(\theta - 0.38)^2 -0.1} + 0.38 & \text{ otherwise } 
  \end{cases}
\]
where $\bar{\theta} = 0.38+\sqrt{(\av - 0.38)^2 + 0.1}$.
  \item If $\av \geq \sqrt{0.2844} + 0.38$,
  $\amap^*(\theta) = \av$.
\end{enumerate}
Next, we solve the maximization problem \eqref{model:dm4}.
The optimal $\av$ is within the range $\av \in [\vn, \sqrt{0.0444} + 0.38]$ and given by the below FOC:
\begin{align} \label{eq-a-ushape}
  \int_0^{\underline{\theta}} \eta(\theta) - \av \, d\theta + \int_{\bar{\theta}}^1 \eta(\theta) - \av \, d\theta + \av - \frac{1}{2} = 0
\end{align}
where $\eta (\theta) = \sqrt{(\theta - 0.38)^2 -0.1} + 0.38 $, $\underline{\theta} =0.38-\sqrt{(\av - 0.38)^2 + 0.1} $ and $\bar{\theta} = 0.38+\sqrt{(\av - 0.38)^2 + 0.1}$
as calculated when solving \eqref{model:dm3}.
Solving \Cref{eq-a-ushape} numerically gives
$\av^* \approx 0.397, \underline{\theta} \approx 0.063$ and 
$\bar{\theta} \approx 0.697$.

\paragraph{Case II: $\av \leq \pv$.}
When $\av \leq \pv$, we have $\amap(\theta) \leq \pv$ for all $\theta \in [0,1]$. 
The solution to \eqref{model:dm3} is the same as the solution for the case $\pv \ge 1$.

\[
\amap^*(\theta) = \begin{cases}
    \av & \text{ if } (\pv- \theta)^2 + \pu< 0,  \\
    \min \{ \av, \varphi (\theta) \} & \text{ otherwise } 
\end{cases}
\]
 where
$\varphi (\theta) = \pv - \sqrt{( \pv- \theta)^2 + \pu}$.
Plugging in the parameter values, we have
\begin{enumerate}
  \item If $\av \in [0.38 -\sqrt{0.0444} ,  \vn]$,
\[
  \amap^*(\theta) = \begin{cases}
    \av & \text{ if } \underline{\theta} < \theta <  \bar{\theta},  \\
     0.38 - \sqrt{(\theta - 0.38)^2 -0.1}  & \text{ otherwise } 
  \end{cases}
\]
where $\underline{\theta} =0.38-\sqrt{(\av - 0.38)^2 + 0.1} $ and $\bar{\theta} = 0.38+\sqrt{(\av - 0.38)^2 + 0.1} $.
  \item If $\av \in [0.38 -\sqrt{0.2844} , 0.38 -\sqrt{0.0444} ]$,
\[
  \amap^*(\theta) = \begin{cases}
    \av & \text{ if } \theta <  \bar{\theta},  \\
     0.38 - \sqrt{(\theta - 0.38)^2 -0.1}  & \text{ otherwise } 
  \end{cases}
\]
where $\bar{\theta} = 0.38+\sqrt{(\av - 0.38)^2 + 0.1}$.
  \item If $\av \leq 0.38 -\sqrt{0.2844} $,
  $\amap^*(\theta) = \av$.
\end{enumerate}
Next, we solve the maximization problem \eqref{model:dm4}.
Note that the optimal $\av$ is within the range $\av \in [0.38 -\sqrt{0.0444} ,  \vn]$ and that the first-order derivative 
is positive for all $\av \in [0.38 -\sqrt{0.0444} , \pv = 0.38]$:
\[
\begin{split} 
  2 \big[ \int_0^{\underline{\theta}}  \av - \varphi(\theta) \, d\theta + \int_{\bar{\theta}}^1 \av - \varphi(\theta) \, d\theta + ( \frac{1}{2} -\av) \big] > 0.
\end{split} 
\]
Therefore, we obtain the corner solution $\av^* = \pv  = 0.38$.

Combining the two cases, the optimal solution is obtained when $\av^* \approx 0.397$.

\hypertarget{app-a82}{%
\subsubsection{Deterministic optimal mechanism}\label{app-a82}}


We follow the same procedures as in the previous proof of 
\Cref{app-a81}.

\paragraph{Case I: $\av \geq \pv$.} 
The solution to \eqref{model:dm3} is the same as the solution for the case $\pv \le 0$.
\begin{enumerate}
  \item If $\av \in [\pv = 0.5, \sqrt{0.25 +\pu} + 0.5]$,
\[
  \amap^*(\theta) = \begin{cases}
    \av & \text{ if } \underline{\theta} \leq \theta \leq  \bar{\theta},  \\
     \sqrt{(\theta - 0.5)^2 +\pu} + 0.5 & \text{ otherwise } 
  \end{cases}
\]
where $\underline{\theta} = 0.5-\sqrt{(\av - 0.5)^2 -\pu} $ and $\bar{\theta} = 0.5 + \sqrt{(\av - 0.5)^2 -\pu} $.
  \item If $\av \geq \sqrt{0.25 +\pu} + 0.5$,
  $\amap^*(\theta) = \av$.
\end{enumerate}
Next, we solve the maximization problem \eqref{model:dm4}.
The optimal $\av$ is within the range $[\pv, \sqrt{0.25 +\pu} + 0.5]$ and the first-order derivative is negative
for all $\av \in [\pv, \sqrt{0.25 +\pu} + 0.5]$:
\[ \begin{split} 
  - 2 \big[ \int_0^{\underline{\theta}}  \eta(\theta) - \av \, d\theta + \int_{\bar{\theta}}^1 \eta(\theta) - \av  \, d\theta + (\av - \frac{1}{2} ) \big] < 0.
\end{split} \]
Therefore, we obtain a corner solution $\av^* = \pv  = 0.5$.

\paragraph{Case II: $\av \leq \pv$.}
The solution to \eqref{model:dm3} is the same as the solution for the case $\pv \ge 1$.
\begin{enumerate}
  \item If $\av \in [0.5 - \sqrt{0.25 +\pu}, \pv  = 0.5]$,
\[
  \amap^*(\theta) = \begin{cases}
    \av & \text{ if } \underline{\theta} \leq \theta \leq  \bar{\theta},  \\
     0.5 - \sqrt{(\theta - 0.5)^2 + \pu}  & \text{ otherwise } 
  \end{cases}
\]
where $\underline{\theta} =0.5-\sqrt{(\av - 0.5)^2 -\pu} $ and $\bar{\theta} = 0.5 + \sqrt{(\av - 0.5)^2 - \pu} $.
  \item If $\av \leq 0.5 - \sqrt{0.25 +\pu}$,
  $\amap^*(\theta) = \av$.
\end{enumerate}
Next, we solve the maximization problem \eqref{model:dm4}.
The optimal $\av$ is within the range $\av \in [0.5 - \sqrt{0.25 +\pu}, \pv]$ and 
the first-order derivative is positive for all $\av \in [0.5 - \sqrt{0.25 +\pu}, \pv]$:
\[ \begin{split} 
  2 \big[ \int_0^{\underline{\theta}}  \av - \varphi(\theta) \, d\theta + \int_{\bar{\theta}}^1 \av - \varphi(\theta) \, d\theta + ( \frac{1}{2} -\av) \big] > 0.
\end{split} \]
Therefore, we obtain a corner solution $\av^* = \pv$.

Combining the two cases, we obtain that in the solution $\av^* = \pv$.
Then $\underline{\theta}^* = 0.5 - \sqrt{-\pu}$ and $\bar{\theta}^* = 0.5 + \sqrt{-\pu}$.
And both cases correspond to the veto probability:
\[
  \pmap^*(\theta) = \begin{cases}
    0 & \text{ if } \underline{\theta}^* \leq \theta \leq  \bar{\theta}^*,  \\
    1 & \text{ otherwise. } 
  \end{cases}
\]
Since the principal vetoes with probability $1$ when $\theta \in [0, \underline{\theta}^*)$ and $ \theta \in (\bar{\theta}^*, 1]$,
$\amap^*(\theta)$ can take any value.
When $\theta \in [\underline{\theta}^*, \bar{\theta}^*]$, we have $\amap^*(\theta) = \av^* = \vn$.

\end{document}